\definecolor{thmcolor}{HTML}{3D3D3D}
\declaretheoremstyle[
    headfont=\bfseries, 
    bodyfont=\normalfont\itshape,
    headpunct={},
    mdframed={
        hidealllines=true,
        leftline=true,
        innertopmargin=0pt,
        linewidth=4pt,
        linecolor=thmcolor!50
    }
]{theoremstyle}
\declaretheorem[
    style= theoremstyle,
    name = Theorem,
    numberwithin = section
]{theorem}
\declaretheorem[
    style= theoremstyle,
    name = Lemma,
    sibling = theorem
]{lemma}
\definecolor{defcolor}{HTML}{AAAAAA}
\declaretheoremstyle[
    headfont=\bfseries, 
    headpunct={},
    mdframed={
        hidealllines=true,
        leftline=true,
        innertopmargin=0pt,
        linewidth=4pt,
        linecolor=defcolor!50
    }
]{definitionstyle}
\declaretheorem[
    style= definitionstyle,
    name = Definition,
    sibling = theorem
]{definition}
\renewenvironment{proof}
{ 
    \begin{mdframed}[
        hidealllines=true,
        leftline=true,
        innertopmargin=0pt,
        linewidth=1pt,
    ]
    \textit{Proof:}
}
{
    \qed
    \end{mdframed}
}
\newcounter{algorithm}
\newenvironment{algorithm}[2]
{
    \begin{mdframed}[
        linewidth=1pt,
    ]
    \refstepcounter{algorithm}
    \textbf{Algorithm~\thealgorithm:}\\
    
    \textbf{Input:} #1 \\
    \textbf{Output: } #2
    
    \begin{enumerate}
}
{
    \end{enumerate}
    \end{mdframed}
}
\author[1,*]{Wilfred Salmon}
\author[1]{Sergii Strelchuk}
\author[2]{Tom Gur}
\affil[1]{DAMTP, Centre for Mathematical Sciences, University of Cambridge}
\affil[2]{Department of Computer Science and Technology, University of Cambridge}
\affil[*]{was29@cam.ac.uk}
\definecolor{myblue}{HTML}{85B09A}
\newcolumntype{M}[1]{>{\centering\arraybackslash}m{#1}}
\title{Provable Advantage in Quantum PAC Learning}
\newcommand{\lset}{\ensuremath{\mathcal{X}}}
\newcommand{\class}[1]{\ensuremath{\{0,1\}^{#1}}}
\newcommand{\ccc}{\ensuremath{\mathcal{C}}}
\newcommand{\dsb}{\ensuremath{\mathcal{D}}}
\newcommand{\dsbof}[1]{\ensuremath{\dsb(#1)}}
\newcommand{\prob}[2][]{\ensuremath{\mathbb{P}_{#1}\left[#2\right]}}
\newcommand{\dist}[2]{\ensuremath{d(#1, #2)}}
\newcommand{\orac}[1][c]{\ensuremath{Q_{#1}}}
\newcommand{\lrnst}[1][c]{\ensuremath{\ket{\psi_{#1}}}}
\newcommand{\inst}{\ensuremath{\ket{\text{IN}}}}
\newcommand{\refl}[1]{\ensuremath{I_{#1}}}
\newcommand{\ps}{\ensuremath{p_s(\theta)}}
\newcommand{\wmv}[2]{\ensuremath{\text{WMV}_{#1,\, #2}}}
\newcommand{\pho}{\ensuremath{\widetilde{O}_u}}
\begin{document}
\date{}
\maketitle

\begin{abstract}
We revisit the problem of characterising the complexity of Quantum PAC learning, as introduced by Bshouty and Jackson [SIAM J. Comput.
1998, 28, 1136–1153]. Several quantum advantages have been demonstrated in this setting, however, none are generic: they apply to particular concept classes and typically only work when the distribution that generates the data is known. In the general case, it was recently shown by Arunachalam and de Wolf [JMLR, 19 (2018) 1-36] that quantum PAC learners can only achieve constant factor advantages over classical PAC learners.

We show that with a natural extension of the definition of quantum PAC learning used by Arunachalam and de Wolf, we can achieve a generic advantage in quantum learning. To be precise, for any concept class $\ccc$ of VC dimension $d$, we show there is an $(\epsilon, \delta)$-quantum PAC learner with sample complexity
\[
O\left(\frac{1}{\sqrt{\epsilon}}\left[d+ \log(\frac{1}{\delta})\right]\log^9(1/\epsilon)\right).
\]
Up to polylogarithmic factors, this is a square root improvement over the classical learning sample complexity. We show the tightness of our result by proving an $\Omega(d/\sqrt{\epsilon})$ lower bound that matches our upper bound up to polylogarithmic factors.
\end{abstract}

\section{Introduction}
Probably approximately correct (PAC) learning \cite{Valiant1984} is a fundamental model of machine learning. One is given a set of functions $\ccc\subseteq \class{\lset}=\{f:\lset\to\{0,1\}\}$, called a concept class, that encodes the structure of a learning problem (for example, functions that only depend on the hamming weight of their input). Given labelled examples from an unknown concept $c\in\ccc$, we are tasked with learning an approximation to $c$. \\

We model the data that the learning algorithm receives by an unknown probability distribution $\dsb$ on $\lset$, and say that a hypothesis $h:\lset\to\{0,1\}$ is $\epsilon$-approximately correct if the probability that it differs from $c$ is at most $\epsilon$.
To be precise, A hypothesis $h\in\class{\lset}$ is said to be $\epsilon$-approximately correct if 
\begin{equation}
    \prob[X\sim\dsb]{h(X)\neq c(X)} \leq \epsilon.
\end{equation}
A learning algorithm $\mathcal{A}$ draws independent samples $(X, c(X))$, where $X$ is distributed according to $\dsb$, and then outputs a hypothesis $h$. The algorithm $\mathcal{A}$ is an $(\epsilon,\delta)$-learner if, with probability at least $1-\delta$ over the random samples, it outputs a $\epsilon$-approximately correct hypothesis.\\

The amount of ``structure" possessed by $\ccc$ is characterised by its Valiant-Chapernikis (VC) dimension \cite{Vapnik1971}, denoted $d$. For a subset $Y\subseteq X$, we define $\ccc|_Y:=\{c|_Y : c\in\mathcal{C}\}$ as the restriction of the concept class to $Y$. We say that $\ccc$ shatters $Y$ if $\ccc|_Y = \class{Y}$, i.e., if all possible labellings of $Y$ appear in concepts in $\ccc$. Then, $d$ is the maximum size of a shattered set, that is
\begin{equation}
    d = \max \{|Y| : Y \text{ is shattered by }\ccc\}.
\end{equation}

Over a period of 27 years \cite{Blumer1989, Hanneke2016}, the exact asymptotic scaling of the minimum number of samples required by an ($\epsilon, \delta$)-learner was found to be
\begin{equation}
    \Theta\left[\frac{1}{\epsilon}\left(d+ \log(\frac{1}{\delta})\right)\right],
\end{equation}
thereby characterising the complexity of classical PAC learning.\\

In 1995, Bshouty and Jackson \cite{Bshouty1995} considered a generalisation of PAC learning to the quantum setting \cite{Arunachalam2017}. Here, instead of receiving independent identically distributed samples $(X,C(X))$, one receives independent copies of a quantum state
\begin{equation}
    \lrnst = \sum_{x\in\lset} \sqrt{\dsbof{x}} \ket{x\; c(x)},
\end{equation}
known as a \emph{quantum sample}.
In particular, measuring such a state in the computational basis gives a sample $(X,C(X))$. In turn, instead of counting the number of samples, the quantum sample complexity is the number of copies of the state given to the quantum learning algorithm. \\

The Quantum PAC model is instrumental in understanding the limits of other quantum cryptographic and computational tasks. For instance, in \cite{arunachalam2021private}, a connection between differential privacy and PAC learnability of quantum states was established, and  recently \cite{cai2022sample} used the PAC framework to investigate the complexity of learning parameterised quantum circuits, which are ubiquitous in variational quantum algorithms where they are used for quantum state preparation. \\

In the special case of quantum PAC learning under the uniform distribution, it has been shown that one can obtain quantum sample complexity advantages in specific learning tasks, such as learning Fourier basis functions \cite{Bernstein1997}, DNF formulae \cite{Bshouty1995}, and $k$-juntas \cite{Chen2023}. These advantages rely on Fourier sampling, in which one applies the Hadamard transform on every qubit followed by a measurement of the resulting state in the computational basis. One observes a bit string $s$ with probability given by its squared Fourier coefficient $|\hat{c}_s|^2$ and can thus directly infer properties of the Fourier spectrum of the unknown function. However, such advantages rely on the distributions $\dsb$ being (approximately) uniform. \\

The general quantum PAC learning model, with an arbitrary and unknown distribution $\dsb$, was studied by Arunachalam and de Wolf \cite{Arunachalam2017, Arunachalam2018}, who showed that the quantum sample complexity has exactly the same asymptotic scaling as the classical learning complexity, ruling out everything but constant factor prospective advantages. \\

Thus, most recent literature has focused identifying advantages only in suitably restricted versions of the quantum PAC model \cite{Chen2023, Pirnay2023}. Nevertheless, such models have demonstrated remarkable utility when assessing the complexity of learning quantum states, channels \cite{aaronson2007learnability,chung2018sample,10.5555/3370256.3370257}, and measurements \cite{padakandla2022pac,cheng2015learnability} in quantum theory with lower bounds on query complexity established in \cite{zhang2010improved}.\\

Here, we consider a natural and less restrictive version of the quantum PAC learning model. Instead of access to copies of the state $\lrnst$, we assume that we have access to the quantum circuit that generates it, similarly in spirit to \cite{Kothari2023, vanApeldoorn2023}. That is, we assume one has access to a quantum circuit $\orac$ that generates a quantum sample $\lrnst$ (for example, as a decomposition into one and two-qubit gates) and thus can implement $\orac$ and $\orac^\dag$. Given this natural adjustment to the input access of quantum PAC learning algorithms, we can revisit the question of whether strong generic (beyond constant-factor) quantum advantages are possible for quantum PAC learning. \\

\subsection{Our results}
In this paper, we show that there is a square root advantage (up to polylogarithmic factors) for quantum PAC learning over classical PAC learning in the full, general model. Our main result (see Section \ref{sec:upper_bound}) is summarised by the following theorem.

\begin{theorem} 
    Let $\ccc$ be a concept class with VC dimension $d$. Then, for every $\epsilon,\delta>0$, there exists a $(\epsilon,\delta)$-quantum PAC learner for $\ccc$ that makes at most
    \begin{equation}\label{eqn:q_upp_bound}
        O\left(\frac{1}{\sqrt{\epsilon}}\left[d+ \log(\frac{1}{\delta})\right]\log^9(1/\epsilon)\right),
    \end{equation}
    calls to an oracle that generates a quantum sample ($\orac$) or its inverse ($\orac^\dag)$.
\end{theorem}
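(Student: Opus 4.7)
The plan is to combine classical empirical risk minimisation with quantum amplitude estimation and amplification, exploiting the fact that access to $\orac$ and $\orac^\dag$ enables a richer set of primitives than mere copies of $\lrnst$. Given any hypothesis $h\in\class{\lset}$, define the error region $E_h=\{x:h(x)\neq c(x)\}$ and let $\eta_h=\prob[X\sim\dsb]{X\in E_h}$. Using $\orac$ to prepare $\lrnst$ followed by a single-qubit comparison of the sample's label with $h(x)$, the amplitude of the ``error'' branch of the resulting state is exactly $\sqrt{\eta_h}$. Amplitude estimation therefore returns a constant-factor estimate of $\eta_h$ using $\widetilde{O}(1/\sqrt{\eta_h})$ queries to $\orac,\orac^\dag$, while amplitude amplification followed by measurement yields classical samples drawn (approximately) from $\dsb$ restricted to $E_h$ at the same amortised cost per sample.

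Equipped with these primitives, I would wrap them in a boosting-style outer loop. Initialise $h_0$ as an arbitrary constant classifier. At round $t$, (i) estimate $\eta_{h_{t-1}}$ by amplitude estimation and halt, returning $h_{t-1}$, if this estimate is $O(\epsilon)$; (ii) otherwise, draw $m=\widetilde{O}(d+\log(1/\delta))$ independent classical labelled examples from $\dsb$ conditioned on $E_{h_{t-1}}$ via amplitude amplification; (iii) run a classical empirical risk minimiser over $\ccc$ on these ``hard'' examples to obtain a weak hypothesis $h'_t$; (iv) combine $h'_t$ with the previous iterates via weighted majority voting to obtain $h_t$. Standard boosting analysis forces the combined error to shrink geometrically, so at most $O(\log(1/\epsilon))$ rounds are needed.

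For the total cost, each round spends $\widetilde{O}((d+\log(1/\delta))/\sqrt{\eta_{h_{t-1}}})$ queries, and since $\eta_{h_{t-1}}\geq\Omega(\epsilon)$ until the very last round, the geometric sum over rounds is dominated by those for which $\eta_{h_{t-1}}=\Theta(\epsilon)$, yielding the advertised $\widetilde{O}((d+\log(1/\delta))/\sqrt{\epsilon})$ overall bound. The polylogarithmic $\log^9(1/\epsilon)$ slack should absorb the precision parameter of amplitude estimation, the $O(\log(1/\epsilon))$ rounds, a union bound over all subroutines (each invoked with failure probability $\delta/\mathrm{poly}\log(1/\epsilon)$), and the additional logarithmic factors appearing in VC-style generalisation bounds for the ``hard'' samples.

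I expect the main obstacle to lie in interfacing the quantum subroutines with the classical boosting analysis. Amplitude amplification only generates samples from a distribution close in total variation to $\dsb$ conditioned on $E_{h_{t-1}}$, and one must confirm that this perturbation is small enough to preserve VC-dimension based generalisation. A further subtlety is that AdaBoost-style reweightings depend on all previous iterates in a way that is awkward to implement as an efficient reflection on top of $\orac$; I would therefore expect to need a smooth booster, whose reweighting is simply the indicator of the current error region and hence matches precisely what amplitude amplification provides. Finally, the per-round query cost is sensitive to the amplitude estimate of $\eta_{h_{t-1}}$, which is only known to a constant factor and is the source of most of the polylog overhead in the final bound.
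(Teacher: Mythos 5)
Your proposal shares the paper's key quantum primitive --- use $\orac,\orac^\dag$ in a Grover/amplitude-amplification loop to draw a labelled counterexample from the error region $E_h$ of a candidate hypothesis $h$ at cost $O(1/\sqrt{\epsilon})$ oracle calls --- but the classical shell you wrap it in has a genuine gap. You assert that ``standard boosting analysis forces the combined error to shrink geometrically'' for a booster whose reweighted distribution is the indicator of the current error region, combined by weighted majority vote. No such boosting theorem exists, and the reason is precisely the worry you raise yourself: AdaBoost-style guarantees require bounded-edge weak learners on distributions whose density relative to $\dsb$ is controlled, and smooth boosters (MadaBoost, SmoothBoost) achieve exactly that bounded-density property; indicator reweighting by $E_{h_{t-1}}$ has an \emph{unbounded} density ratio as $\eta_{h_{t-1}}\to\epsilon$. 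Concretely, a hypothesis $h'_t$ obtained by ERM on samples from $\dsb$ conditioned on $E_{h_{t-1}}$ can be arbitrarily bad outside $E_{h_{t-1}}$, and a majority vote of $h_{t-1}$ and $h'_t$ need not have smaller error than $h_{t-1}$; nothing in the VC-generalisation bound for the conditional distribution transfers to the unconditioned one. You would need a combination rule that knows, at test time, \emph{which} region a test point falls in, which is effectively a decision-list structure rather than a majority vote.

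What the paper does instead is replace the boosting shell with the theory of \emph{equivalence queries} (random-counterexample learning). It invokes the classical result of Gluch~\cite{Gluch2021} that any class of VC dimension $d$ can be learned with $T_E = O\bigl((d+\log(1/\delta))\log^9(1/\epsilon)\bigr)$ ideal equivalence queries --- note that the $\log^9(1/\epsilon)$ in the final bound is inherited wholesale from this classical result, not (as you guessed) slack absorbing amplitude-estimation precision, round counts and union bounds. The Grover subroutine is then used to realise \emph{imperfect} equivalence queries that succeed with constant probability whenever $d(h,c)\geq\epsilon$, and the technical heart of the paper is the mechanism for tolerating failures: the simulation runs with a total imperfect-query budget $R=\Theta(T_E+\log(1/\delta))$, and if the budget is exhausted before the classical learner terminates, one outputs the weighted majority vote of \emph{all hypotheses queried so far}, weighted by the number of imperfect queries spent on each (Lemmas~\ref{lem:feasible_bound} and~\ref{lem:maj_vote_good}). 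The point is that exhausting the budget means most queries were spent on hypotheses already within $\epsilon$ of $c$, so this majority vote is provably $4\epsilon$-close to $c$. Your proposal has no analogous fallback for the rounds in which amplitude amplification fails to return a counterexample, which it inevitably will near termination when $\eta_{h_{t-1}}\approx\epsilon$.
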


In comparison, the optimal classical PAC learning complexity (and quantum PAC complexity given access to copies of $\lrnst$ \cite{Arunachalam2018}) is given in equation \eqref{eqn:class_complexity}. Thus, our upper bound is a square root improvement (up to polylogarithmic factors) over the best possible classical learning algorithm. In fact, we show that this upper bound is essentially tight, up to polylogarithmic factors, as captured by the following theorem.

\begin{theorem}
    Let $\ccc$ be a concept class with VC dimension $d$. Then, for a sufficiently small constant $\delta>0$ and for all $\epsilon>0$, any quantum $(\epsilon,\delta)$-learner for $\ccc$ makes at least
    \begin{equation}\label{eqn:q_low_bound}
        \Omega\left(\frac{d}{\sqrt{\epsilon}}\right)
    \end{equation}
    calls to an oracle that generates a quantum sample ($\orac$) or its inverse ($\orac^\dag)$.
\end{theorem}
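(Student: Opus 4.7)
My plan is to construct a hard instance within $\ccc$, reduce PAC learning to a codeword identification task via a good error-correcting code, then lower bound the identification task via a multi-target quantum search argument.

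\textit{Hard instance and reduction.} Since $\ccc$ has VC dimension $d$, fix a set $S = \{s_1, \ldots, s_d\} \subseteq \lset$ shattered by $\ccc$ and an auxiliary point $s_0 \in \lset \setminus S$. For each $b \in \class{d}$, pick a concept $c_b \in \ccc$ with $c_b|_S = b$ and $c_b(s_0) = 0$, and use the distribution $\dsbof{s_0} = 1 - 5\epsilon$ and $\dsbof{s_i} = 5\epsilon/d$. The quantum sample is then
\begin{equation*}
\lrnst[c_b] = \sqrt{1 - 5\epsilon}\,\ket{s_0, 0} + \sqrt{5\epsilon/d}\sum_{i=1}^{d}\ket{s_i, b_i},
\end{equation*}
and the error of a hypothesis $h$ (setting $h(s_0) = 0$) equals $(5\epsilon/d)\cdot d_H(h|_S, b)$, where $d_H$ denotes Hamming distance. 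Pick a Gilbert--Varshamov code $B \subseteq \class{d}$ with $|B| = 2^{\Omega(d)}$ and minimum Hamming distance $> d/2$. Then any $(\epsilon, \delta)$-quantum PAC learner produces (with probability $\geq 1 - \delta$) a hypothesis $h$ with $d_H(h|_S, b) \leq d/5 < d/4$, which nearest-codeword decoding in $B$ uniquely identifies. Thus PAC learning is at least as hard as identifying a uniformly random codeword $b \in B$ from $T$ calls to $\orac[c_b]$ or $\orac[c_b]^\dag$.

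\textit{Lower bound via multi-target quantum search.} The core step is to prove $T = \Omega(d/\sqrt{\epsilon})$ for the identification task. A direct hybrid bound---using the adversary-achievable operator norm $\|\orac[c_b] - \orac[c_{b'}]\|_{\mathrm{op}} = O(\sqrt{\epsilon\, d_H(b,b')/d})$ (attained by choosing each $\orac[c_b]$ to be the minimal rotation-in-the-plane unitary sending $\ket{0}$ to $\lrnst[c_b]$) combined with Fano--Holevo applied to the ensemble $\{\ket{\Phi_b}\}_{b \in B}$ of final learner states---only yields $T = \Omega(1/\sqrt{\epsilon})$, losing the full $d$ factor. To recover it, my plan is to reduce codeword identification to the multi-target quantum search problem: viewing the $d$ bits of $b$ as $d$ ``markers'' hidden in an effective search space of size $d/\epsilon$, where the junk amplitude $\sqrt{1-5\epsilon}\,\ket{s_0,0}$ plays the role of the overwhelming unmarked background, identifying $b$ amounts to approximately finding all $d$ markers. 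The tight lower bound $\Omega(\sqrt{nk}) = \Omega(\sqrt{(d/\epsilon)\cdot d}) = \Omega(d/\sqrt{\epsilon})$ of Zalka (for Grover search with $k$ marked items in $n$) then gives the claim.

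\textit{Main obstacle.} The primary difficulty is making the reduction to multi-target search quantitatively tight. A single query to $\orac[c_b]$ implements only an ``attenuated'' access to the $b$-dependent coordinates (amplitude $\Theta(\sqrt{\epsilon/d})$ per coordinate rather than $\Theta(1/\sqrt{d})$ for a standard Grover oracle on $\class{d}$), so the correspondence with a Grover oracle in an $n = d/\epsilon$-sized effective space must go via an amplification-aware argument rather than a direct oracle simulation. One must also verify that the approximate-recovery tolerance of PAC learning (allowing $\Theta(d)$ Hamming errors) does not weaken the multi-target search lower bound, for instance by applying it in a regime that only requires locating a constant fraction of the markers; this is where most of the technical work of the proof will lie.
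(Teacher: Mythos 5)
Your hard-instance construction (perturbed delta distribution on a shattered set, reduction of PAC learning to approximate bit-string recovery) is essentially the same as the paper's; the Gilbert--Varshamov decoding step is an acceptable alternative to the paper's direct ``must agree on $\geq 3/4$ of the bits'' framing. The gap is in the second half. Reducing to Zalka's multi-target Grover search lower bound cannot recover the full $\Omega(d/\sqrt{\epsilon})$, because the reduction runs in the wrong direction with the wrong constant: a \emph{single} call to a standard bit oracle $U_b\ket{i}\ket{y}=\ket{i}\ket{y\oplus b_i}$ already suffices to implement $\orac[c_b]$ (prepare $\sqrt{1-5\epsilon}\ket{s_0,0}+\sqrt{5\epsilon/d}\sum_i\ket{s_i,0}$ using the known distribution, then XOR via one query), so the state-preparation oracle is at most as powerful as the membership oracle up to $O(1)$ factors. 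Consequently Zalka's $\Omega(\sqrt{nk})$ bound on the genuine search space $[d]$ (there is no physical search space of size $d/\epsilon$) would only give $\Omega(d)$, which loses the $1/\sqrt{\epsilon}$ factor entirely. The ``amplification-aware argument'' you flag as the main obstacle is not a technicality to be filled in later; it is precisely what your chosen tool cannot supply, because Zalka's theorem is about full-strength phase/bit oracles, not attenuated ones.

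The paper instead reduces to the \emph{weak} phase-kickback oracle $O_u\ket{x}=e^{2i\eta u_x}\ket{x}$ with $\sin\eta=\sqrt{4\epsilon}$, and invokes Lemma~51 of van Apeldoorn, Cornelissen, Gily\'en, and Nannicini, which states that recovering more than $3/4$ of the bits of $u\in\{0,1\}^d$ with high probability requires $\Omega(d/\eta)$ calls to (controlled) $O_u$ or $O_u^\dag$. The point is that the small-$\eta$ oracle is genuinely weak, in a way calibrated to the $\sqrt{\epsilon}$ attenuation: the paper shows (Lemma~\ref{lem:po_to_PACo}) that $\orac[u]$ can be built from $O(1)$ controlled calls to $O_u$ and $O_u^\dag$, so a $T$-query PAC learner yields an $O(T)$-query bit-string recoverer in the weak-phase model, giving $T=\Omega(d/\eta)=\Omega(d/\sqrt{\epsilon})$. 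If you want to keep your search-flavoured framing, you should replace Zalka's theorem by an amplitude-sensitive lower bound for weak oracles; as stated, the core step of your proposal does not go through.
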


\subsection{Technical overview} 
Our starting point is the observation that the lower bound of Arunachalam and de Wolf \cite{Arunachalam2018} implicitly rests on the assumption that a quantum learning algorithm must not depend on the underlying concept, and it can thus be represented by a (concept independent) POVM. They then reduce the problem of PAC learning to that of state discrimination (where the POVM is state-independent). However, if we allow for the common assumption that the algorithm has access to an oracle $\orac$ generating $\lrnst$, the proof of the lower bound no longer holds\footnote{Since the state $\lrnst$ must be produced by some process, this assumption is quite minimal.}. If the POVM describing the algorithm calls the oracle, it, \textit{by definition}, depends on the underlying concept. Thus, one cannot reduce the problem to that of state discrimination, where it is assumed that the POVM is independent of the input state. \\ 

If one implements $\orac$ on some physical device (for example, as a series of one and two-qubit gates), it is natural to assume that one can also implement the inverse process $\orac^\dag$ (for example, by reversing the order of the gates and replacing each by its inverse). Thus, we argue that if one has access to the state $\lrnst$ it is natural to also consider the situation in which one also has access to $\orac$ and $\orac^\dag$. Indeed, this setting has recently received significant attention~\cite{vanApeldoorn2023, haah2023query}.\\

Given access to $\orac$ and $\orac^\dag$, it is tempting to attempt techniques such as Grover search and amplitude amplification, which often achieve quadratic quantum advantages. Consider, for example, the simplest possible concept class $\ccc=\class{\lset}$: the set of all possible classifiers. It is known that a classical worst-case distribution for this class is a ``perturbed" delta-function \cite{Arunachalam2018}, where there is a marked element $x_0\in \lset$ with probability $\dsbof{x_0} = 1-4\epsilon$, and all other elements have equal probability. Roughly speaking, to $(\epsilon,\delta)$-learn $\ccc$, one must learn a fraction of $3/4$ of the values of $c$. However, it takes on average $O(1/\epsilon)$ samples to return an $x$ that \textit{isn't} $x_0$ and thus the classical learning query complexity is $\Omega(|\lset|/\epsilon)$. In this case, one could repeatedly run Grover's search, marking any state $\ket{x\; b}$ as good if we have not yet learnt $c(x)$. With Grover search, it only takes $O(1/\sqrt{\epsilon})$ oracle calls to return an $x$ that is not $x_0$ and thus we see the quantum query complexity is $O(|\lset|/\sqrt{\epsilon})$, the desired quadratic improvement. Therefore, we already outperform the lower bound of Arunachalam and de Wolf \cite{Arunachalam2018}. \\

Note that the method above does not immediately generalise to other concept classes. For example, consider the concept class
\begin{equation*}
    \ccc=\{c\in\class{\lset}: |c^{-1}(\{1\})| = d\} \;,    
\end{equation*}
 the class of classifiers with exactly $d$ inputs that map to 1, and take $\dsb$ to be the uniform distribution on $\lset$. If $|\lset|$ is very large, then most unseen $x$'s will have $c(x)=0$ and thus the above approach is uninformative. Instead, instead, one should mark a state $\ket{x\; b}$ as good if $b = 1$. In this way, one can search for the inputs $x\in\lset$ that have $c(x)=1$ and hence deduce $c$. This will also lead to a quadratic quantum advantage. \\

However, for general concept classes, it is less clear what to search for. One could run the Halving algorithm, where we mark a state $\ket{x\; y}$ as good if the majority of the concepts $h\in \ccc$ that are consistent with the data so far have $h(x) = 1-y$. In this case, every time the Grover algorithm succeeds, one would eliminate at least half of the concepts in $\ccc$. However, this leads to a $\log |\ccc|$ factor in the learning complexity, which can be as large as $d\log |\lset|$, i.e., arbitrarily larger than $d$ (the VC dimension of $\ccc$). Thus, even under the simplifying assumption of the uniform distribution, it is unclear how to attempt to use Grover's search to obtain a quantum advantage. \\

Nevertheless, we show that one can achieve a square root quantum advantage in the general case. As a first step, we use the technique of equivalence queries \cite{Gluch2021} (also known as random counterexamples). An  equivalence query is an alternative learning oracle to the traditional PAC oracle, in which one submits a candidate hypothesis $h\in\class{\lset}$. If $h=c$, then the oracle outputs $YES$, otherwise it produces a labelled counterexample $(X, c(X))$ where
\begin{enumerate}[label=(\roman*)]
    \item {$h(X)\neq c(X)$.}
    \item {$X$ is distributed according to $\mathbb{P}(y) = \dsbof{y}/\dsbof{\{x:c(x)\neq h(x)\}}$.}
\end{enumerate}

Observe that by marking a state $\ket{x\; y}$ as good if $h(x) = 1-y$, we can see how to implement an equivalence query using Grover search, and thus one can hope to use this tool from classical learning theory to achieve an advantage. However, when one removes the simplifying assumption of a known distribution, further problems arise. \\

For a generic distribution, we do not know $\dsb(x)$ for any $x\in\lset$ and therefore one cannot run exact Grover search. Instead, we consider a well-studied technique \cite{Boyer1998}, in which one makes a random number of $M$ queries to the Grover oracle, where $M$ is uniformly distributed between 0 and a chosen threshold $T_G$. This search succeeds with non-negligible probability if the amplitude of the projection of the initial state onto the subspace spanned by the ``good" states (the ``good" subspace) is $\Omega(1/T_G)$. For an equivalence query $h$, this amplitude is $\sqrt{\dsbof{\{x:c(x)\neq h(x)\}}}$, which could be arbitrarily small (as $\mathcal{D}$ is arbitrary). Hence, it may take an arbitrarily large (expected) number of iterations of Grover's search (and hence oracle calls) to run a classical equivalence query learning algorithm. \\

To solve this issue, we show how to use equivalence queries that succeed with a constant probability, called imperfect equivalence queries, to PAC learn a concept. We can then run these imperfect equivalence queries using Grover search. We use a classical (ideal) equivalence query algorithm, replacing equivalence queries with repeated imperfect equivalence queries, but with a maximum imperfect equivalence query budget $R$. Suppose that the algorithm requires equivalence queries to hypotheses $h_1, \dots h_k$. If all of the successfully run an equivalence query for every hypothesis, then the classical algorithm succeeds, and we use its output. Otherwise, we hit the imperfect equivalence query budget $R$ and must terminate the classical algorithm early. By choosing $R$ sufficiently large, we can be sure that if we hit the budget, most of the imperfect equivalence queries were spent on hypotheses $h_i$ that are ``close" to $c$ (and hence have a low chance of the Grover search succeeding). Thus if we take the ``average" of the hypotheses $h_i$ weighted by the number of imperfect equivalence queries spent on each hypothesis, we also output a classifier close to $c$. \\

To conclude the section, we sketch a proof of our lower bound. We consider an arbitrary concept class $\ccc$ of VC dimension $d$. We note that there is a shattered set $Y\subseteq \lset$ of size $d$, and take $\dsb$ to be a ``perturbed" delta-function distribution on $Y$. We can thus think of concepts $c$ in $\ccc$ as bit strings of length $d$, where the bit string describes $c$'s action on $Y$. Since $Y$ is shattered by $\ccc$, all possible bit strings will appear. Any candidate PAC algorithm must be able to recover most of the bit string with high probability. We reduce to a known problem by introducing a weak phase-kickback oracle for the bit string, which we use to implement the PAC oracle. We can then use a standard lower bound \cite{vanApeldoorn2023} on recovering a bit string with high probability using a weak phase kickback oracle.

\subsection{Open problems}
This work leaves several interesting avenues for further research. Firstly, one could attempt to tighten the upper bound \eqref{eqn:q_upp_bound} to remove polylogarithmic factors and prove a tight matching lower bound. The removal of a $\log(1/\epsilon)$ factor in the query complexity for classical PAC learning took 27 years \cite{Blumer1989, Hanneke2016}; we hope that the quantum case will be simpler. Moreover, in order to achieve $1/\sqrt{\epsilon}$ scaling with our method, one would require the optimal classical equivalence query learning complexity to have no $\epsilon$ dependence and thus, a different approach is likely to be required. \\

It is interesting to consider the power of quantum learning algorithms with access to the oracle $\orac$, but not its inverse $\orac^\dag$. The inverse oracle seems necessary for Grover's search, and thus it is unclear if a quantum advantage is possible. The lack of such an advantage would have interesting implications for understanding what makes quantum computing more powerful than classical computation.\\

Finally, one could consider the implications of this work to generic advantages in more practical models of quantum machine learning, such as quantum neural networks.

\subsection{Organisation}
We first cover all required technical preliminaries in Section \ref{sec:prelims}. In Section \ref{sec:grover_sub}, we cover our Grover subroutine that leads to the quadratic advantage. Equivalence queries and how to use imperfect equivalence queries in a classical learning algorithm are both described in Section \ref{sec:equiv_queries}. Using the results of these two sections, we derive the upper bound \eqref{eqn:q_upp_bound} in Section \ref{sec:upper_bound}; we prove an almost matching lower bound on our quantum model in Section \ref{sec:lower_bound}, using a reduction to a phase oracle problem. Finally, we consider the application of our algorithm to learning $k-$juntas in Section \ref{sec:k-junta}.

\section{Preliminaries}\label{sec:prelims}
We will only consider functions defined on finite sets. We first introduce the standard, classical model of PAC learning \cite{Valiant1984}. For a finite set $\lset$, let $\class{\lset}=\{f:\lset\to\{0,1\}\}$, an element $f\in\class{\lset}$ is called a classifier. We wish to \textit{approximately} learn an unknown classifier $c$ from a known subset of classifiers $\ccc\subseteq\class{\lset}$, where $\ccc$ is called a concept class. \\

There is an unknown distribution $\dsb$ on $\lset$, where $\dsbof{x}$ denotes the probability of drawing $x$ from $\lset$. The distance between two classifiers is defined as the probability they disagree: $\dist{h_1}{h_2} := \prob[X\sim \dsb]{h_1(X)\neq h_2(X)}$. For a fixed tolerance $\epsilon>0$ we say a classifier $h\in\class{\lset}$ is $\epsilon$-\textit{approximately correct} if $\dist{h}{c}\leq \epsilon$. \\

A learning algorithm $\mathcal{A}$ has access to some oracle that gives information about $c$. Traditionally, one assumes that the oracle generates a labelled example $(X,c(X))$ at random, where $X$ is distributed according to $\dsb$. We will consider an additional type of oracle in section \ref{sec:equiv_queries}. The sample complexity of $\mathcal{A}$ is the number of labelled examples it receives.\\

For a fixed error probability $\delta$, we say that an algorithm $\mathcal{A}$ is an $(\epsilon, \delta)$ learner if, with probability at least $1-\delta$ (over the randomness of the algorithm), the algorithm outputs an $\epsilon$-approximately correct hypothesis, \textit{for every possible $c$ and $\dsb$}. \\

For a fixed concept class $\ccc$ and $\epsilon, \delta >0$, one wishes to find an $(\epsilon,\delta)$-learner with minimum sample complexity. The optimal sample complexity will depend on $\epsilon, \delta$ and some measure of complexity of the class $\ccc$, which we now define. For a subset $Y\subseteq \lset$, we define $\ccc|_Y:=\{c|_Y : c\in\mathcal{C}\}$ as the restriction of the concept class to $Y$. We say that $\ccc$ shatters $Y$ if $\ccc|_Y = \class{Y}$, i.e., if all possible labellings of $Y$ appear in concepts in $\ccc$. The Valiant-Chapernikis (VC) dimension \cite{Vapnik1971} of $\ccc$, denoted $d$, is the maximum size of a shattered set, that is
\begin{equation}
    d = \max \{|Y| : Y \text{ is shattered by }\ccc\}.
\end{equation}

In \cite{Blumer1989, Hanneke2016}, it was shown that the optimal sample complexity using labelled examples, denoted $T_C(\epsilon, \delta, d)$ scales as
\begin{equation}\label{eqn:class_complexity}
    T_C = \Theta\left[\frac{1}{\epsilon}\left(d+ \log(\frac{1}{\delta})\right)\right].
\end{equation}

In the quantum PAC setting \cite{Bshouty1995}, one assumes that the data is stored coherently, i.e., one considers the state 
\begin{equation}
    \lrnst:=\sum_{x\in \lset}\sqrt{\dsbof{x}}\ket{x\; c(x)},
\end{equation}
chosen so that measuring $\lrnst$ in the computational basis gives a random labelled example. Instead of the classical sample complexity, one considers the minimum number of copies $T_S(\epsilon, \delta, d)$ of $\lrnst$ required to PAC learn $\ccc$. Since one can always measure the state in place of a call to a classical oracle, $T_S$ is, at worst, the optimal sample complexity of a classical algorithm. In fact, Arunachalam and de Wolf \cite{Arunachalam2018} showed that there is no (asymptotic) quantum advantage from using states instead of oracle calls - the optimal $T_S$ grows exactly as in equation \eqref{eqn:class_complexity}. \\

We assume a stronger model, in which one has access to an oracle $\orac$ (which depends on the underlying concept), defined by its action on a fixed known input state $\inst$ (independent of the underlying concept):
\begin{equation}
    \orac \inst = \lrnst = \sum_{x\in \lset}\sqrt{\dsbof{x}}\ket{x\; c(x)}.
\end{equation}
This is similar in spirit to the recent work \cite{vanApeldoorn2023}, which deals with state tomography with a state preparation unitary. We also assume that the algorithm has access to the inverse of the oracle, $\orac^\dag$. This is relevant if, for example, $\orac$ is given as a quantum circuit of one or two-qubit gates; in this case, $\orac^\dag$ may be constructed by reversing the order of the gates and replacing each with its inverse. We define the learning complexity of any algorithm as the total number of queries to $\orac$ or $\orac^\dag$. The minimum learning complexity of any ($\epsilon, \delta$)-learner is denoted $T_O(\epsilon, \delta, \ccc)$. \\

The lower bound of \cite{Arunachalam2018} does not apply to a model with access to $\orac$, as it assumes the quantum algorithm is described by a POVM that is \textit{independent of the underlying concept $c$}. However, $\orac$ explicitly depends on $c$ and thus, any algorithm (or POVM) that calls $\orac$ will violate the assumptions in \cite{Arunachalam2018}. Hence, one can hope for quantum advantage in this setting. \\

We recap all of the different learning models considered in Table \ref{tab:learning_models}.\\

\begin{table}[ht]
    \centering
    \begin{tabular}{M{0.18\linewidth}|M{0.1\linewidth}|M{0.18\linewidth}|M{0.18\linewidth}|M{0.26\linewidth}}
    \rowcolor{lightgray}
         Model & Quantum or  Classical & Learning resource & Optimal $(\epsilon,\delta)$ learner complexity & Bounds on optimal learner complexity  \\ [8pt] \hline

         Labelled examples & Classical & Sample $(X,C(X))$ where $X\sim\dsb$ & $T_C$ & $\Theta\left[\frac{1}{\epsilon}\left(d+ \log(\frac{1}{\delta})\right)\right]$  \\ [12pt] \hline

         Equivalence queries & Classical & See Section \ref{sec:equiv_queries} & $T_E$ & $O\left(\left[d+\log(\frac{1}{\delta})\right]\log^9\left(\frac{1}{\epsilon}\right)\right)$ \\ [8pt] \hline

         Imperfect equivalence queries & Classical & See Section \ref{sec:equiv_queries} & $T_{IE}$ & $O(T_E)$ \\ [8pt] \hline

         Quantum samples & Quantum & Copy of $\lrnst$ & $T_S$ & $\Theta(T_C)$ \\ [8pt] \hline

         Quantum oracle calls & Quantum & Application of $\orac$ or $\orac^\dag$ & $T_O$ &  $O(\frac{1}{\sqrt{\epsilon}}T_{IE})$, $\Omega(\frac{d}{\sqrt{\epsilon}})$ 
    \end{tabular}
    \caption{Different learning models considered in our work. $T_M$ corresponds to the minimum number of resources needed by any $(\epsilon,\delta)$-learner in model $M$.}
    \label{tab:learning_models}
\end{table}

We end the preliminaries section with a recap of Grover's algorithm. For a subspace $\mathcal{V}$ of a Hilbert space $\mathcal{H}$, let $\Pi_{\mathcal{V}}$ be the orthogonal projection map onto $\mathcal{V}$. Furthermore, let $\refl{\mathcal{V}}$ be the reflection operator in $\mathcal{V}^\perp$, given by
\begin{equation}
    \refl{\mathcal{V}} = \mathbbm{1} - 2\Pi_{\mathcal{V}}.
\end{equation}
For a state $\ket{\psi}$, let $\refl{\ket{\psi}}$ be the reflection operator when $\mathcal{V}=$ span$\{\ket{\psi}\}$.\\

Grover search takes as its input a ``good" subspace $\mathcal{G}\subseteq\mathcal{H}$, and an input state $\ket{\psi}$. One then implements the Grover operator:
\begin{equation}\label{eqn:Grover_operator}
    D = - \refl{\ket{\psi}}\refl{\mathcal{G}}.
\end{equation}

The state $\ket{\psi}$ can be decomposed as 
\begin{equation}
    \ket{\psi} = \sin(\theta)\ket{g} + \cos(\theta)\ket{b},
\end{equation}
where $\ket{g},\ket{b}$ are orthonormal, $\theta\in[0,\pi/2]$, $\ket{g}\in\mathcal{G}, \ket{b}\in\mathcal{G}^\perp$. It is well-known \cite{Nielsen2010} that
\begin{equation}\label{eqn:angle_change}
    D^n \ket{\psi} = \sin((2n+1)\theta)\ket{g} + \cos((2n+1)\theta)\ket{b}.
\end{equation}
and thus if one knows $\theta$ exactly, one can apply $D^n$ such that $\sin((2n+1)\theta)\approx 1$.

\section{Grover Subroutine}\label{sec:grover_sub}
An essential subroutine for our quantum advantage is to use calls to $\orac$ and $\orac^\dag$ to run a Grover search \cite{Nielsen2010, Grover1996}. This leads to a quadratic improvement in learning complexity (up to polylogarithmic factors) over classical PAC learning. In this section, we describe our Grover subroutine.\\

Our Grover subroutine takes as an input a ``good" subset $G\subseteq\{(x,b):x\in \lset, b\in\{0,1\}\}$, where we wish to find an $x$ such that $(x,c(x))\in G$. We define a corresponding ``good" subspace by 
\begin{equation}
    \mathcal{G} = \text{span}\{\ket{x\; b} : (x,b)\in G\}.    
\end{equation}
In order to implement Grover's search, we need to implement the Grover operator, as defined in equation \eqref{eqn:Grover_operator}. We show that implementing $D$ requires a constant number of queries.

\begin{lemma}\label{lem:query_Grover}
    One can implement the Grover operator $D$ with one call to $\orac$ and one to $\orac^\dag$.
\end{lemma}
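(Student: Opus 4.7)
The proof is a direct application of a standard amplitude amplification identity. Here is how I would proceed.

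The Grover operator to implement is $D = -\refl{\lrnst}\refl{\mathcal{G}}$, where the input state for the Grover search is $\lrnst = \orac \inst$. The plan is to handle the two reflections separately.

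First, observe that $\refl{\mathcal{G}} = \mathbbm{1} - 2\Pi_{\mathcal{G}}$ is a reflection in a subspace spanned by a \emph{known} set of computational basis states, namely $\{\ket{x\; b} : (x,b)\in G\}$. In particular, $\refl{\mathcal{G}}$ does not depend on the unknown concept $c$ and can be implemented without any calls to $\orac$ or $\orac^\dag$, using the standard construction: introduce an ancilla qubit, coherently compute the indicator of membership in $G$ into the ancilla, apply a $Z$ gate, and uncompute the ancilla.

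Second, I would use the identity
\begin{equation}
    \refl{\lrnst} = \mathbbm{1} - 2\lrnst\!\bra{\psi_c} = \orac\bigl(\mathbbm{1} - 2\inst\!\bra{\text{IN}}\bigr)\orac^\dag = \orac \,\refl{\inst}\, \orac^\dag,
\end{equation}
which uses only the defining equation $\orac \inst = \lrnst$ and the unitarity of $\orac$. Since $\inst$ is a fixed, concept-independent state, the reflection $\refl{\inst}$ can be implemented without any oracle calls (again by the standard compute-$Z$-uncompute trick). Therefore implementing $\refl{\lrnst}$ requires exactly one call to $\orac^\dag$ followed by one call to $\orac$.

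Combining the two pieces, $D$ consists of $\refl{\mathcal{G}}$ (no oracle calls), then $\refl{\lrnst}$ (one call each to $\orac$ and $\orac^\dag$), and an overall minus sign which is an irrelevant global phase. In total the construction uses one query to $\orac$ and one to $\orac^\dag$, as claimed. There is no real obstacle here; the only subtlety worth flagging is that $\refl{\mathcal{G}}$ is a concept-independent reflection, which is what makes it possible to count only the oracle calls coming from the two invocations of $\orac$ and $\orac^\dag$ inside $\refl{\lrnst}$.
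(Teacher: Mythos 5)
Your proof is correct and follows the same route as the paper: $\refl{\mathcal{G}}$ and $\refl{\inst}$ are concept-independent, and $\refl{\lrnst} = \orac\,\refl{\inst}\,\orac^\dag$ supplies the two oracle calls. The extra remarks on implementing the concept-independent reflections are fine but not needed for the count.
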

\begin{proof}
Note that $\refl{\mathcal{G}}$ is independent of $c$ and, therefore, may be implemented by a (possibly exponentially sized circuit) without any queries. To implement $\refl{\lrnst}$, note that

\begin{align}
    \refl{\lrnst} &= \mathbbm{1} - 2\dyad{\psi_c}, \\
              &= \orac(\mathbbm{1} - 2\dyad{\text{IN}})\orac^\dag, \\
              &= \orac\refl{\inst}\orac^\dag.
\end{align}
Note that $\refl{\inst}$ is independent of $c$ and, therefore, may be implemented by a (possibly exponentially sized circuit) without any queries. 
\end{proof}

We decompose 
\begin{equation}\label{eqn:angle_decomp}
    \lrnst = \sin(\theta)\ket{g} + \cos(\theta)\ket{b},
\end{equation}
where $\ket{g},\ket{b}$ are orthonormal, $\theta\in[0,\pi/2]$, $\ket{g}\in\mathcal{G}, \ket{b}\in\mathcal{G}^\perp$. If we knew $\theta$ exactly, we could apply $D^n$ such that $\sin((2n+1)\theta)\approx 1$. However, since $\theta$ depends on $\dsb$, which is unknown, this is impossible. Instead, we use the well-established \cite{Boyer1998} version of Grover's search for an unknown number of items. Our exact subroutine is given below; Algorithm \ref{alg:Grover_Search}.

\begin{algorithm}
    {$G \subseteq\{(x,b):x\in \lset, b\in\{0,1\}\}$ a good subspace, $\epsilon > 0$ a tolerance}
    {labelled example $(x,c(x))$. Succeeds if $(x,c(x))\in G$}
    \item{ Produce $\lrnst= \orac\inst$ }\label{alg:Grover_Search}
    \item{ Pick $N$ from $0,1\dots, \lceil2/\sqrt{\epsilon}\rceil-1$ uniformly at random }
    \item{ Apply $D$, the Grover operator, $N$ times to $\lrnst$ }
    \item{ Measure the resulting state in the computational basis }
\end{algorithm}

The properties of our algorithm are summarised in the following theorem
\begin{theorem}\label{thm:Grover_promises}
    Let $G \subseteq\{(x,b):x\in \lset, b\in\{0,1\}\}$ be a good subset, $\epsilon > 0$ be a fixed tolerance. Suppose that we run Algorithm \ref{alg:Grover_Search} with these inputs, then
    \begin{enumerate}[label=(\roman*)]
        \item{
            In the worst case, the algorithm makes $O(1/\sqrt{\epsilon})$ oracle (or inverse oracle) calls
        }
        \item {
            If $\prob[X\sim \dsb]{(X,c(X))\in G}\geq \epsilon$ then the algorithm succeeds, i.e., returns $(x,c(x))\in G$, with probability at least $p=0.09$.
        }
        \item {
            Conditional on succeeding, the output of the algorithm $(X,c(X))$ is distributed according to
            \begin{equation}
                \prob{(X,c(X))|\text{algorithm succeeds}} = \frac{\prob[X\sim\dsb]{X}}{\prob[X\sim\dsb]{(X,c(X))\in G}}.
            \end{equation}
        }
    \end{enumerate}
\end{theorem}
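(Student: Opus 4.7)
The plan is to handle the three parts of the theorem separately. Part (i) is a direct count: by Lemma \ref{lem:query_Grover}, each application of the Grover operator $D$ uses one call to $\orac$ and one to $\orac^\dag$, and the initial preparation $\lrnst = \orac\inst$ uses one more call to $\orac$. Since $N \leq \lceil 2/\sqrt{\epsilon}\rceil - 1$, the total number of oracle calls is at most $2\lceil 2/\sqrt{\epsilon}\rceil - 1 = O(1/\sqrt{\epsilon})$.

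For part (ii), the approach is to combine equations \eqref{eqn:angle_decomp} and \eqref{eqn:angle_change}. With $G$ playing the role of the good subspace, the projection of $\lrnst$ onto $\mathcal{G}$ has squared norm $\sin^2(\theta) = \prob[X\sim\dsb]{(X,c(X))\in G} \geq \epsilon$. Applying $D^N$ and measuring then yields an element of $G$ with probability exactly $\sin^2((2N+1)\theta)$. I would then average over the uniform choice of $N \in \{0,\ldots, T-1\}$, where $T = \lceil 2/\sqrt{\epsilon}\rceil$, using a geometric-series identity for $\sum \cos((4N+2)\theta)$ to obtain
\begin{equation*}
    P[\text{success}] \;=\; \frac{1}{T}\sum_{N=0}^{T-1}\sin^2((2N+1)\theta) \;=\; \frac{1}{2} - \frac{\sin(4T\theta)}{4T\sin(2\theta)}.
\end{equation*}
The remaining work is to bound the correction term. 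For $\theta \in [\arcsin\sqrt{\epsilon},\, \pi/4]$ one has $\cos(\theta) \geq 1/\sqrt{2}$, hence $|\sin(2\theta)| \geq \sqrt{2\epsilon}$; combined with $T \geq 2/\sqrt{\epsilon}$ this gives a correction of at most $1/(8\sqrt{2})$, so the success probability is at least $1/2 - 1/(8\sqrt{2}) > 0.38$. For $\theta \in (\pi/4,\, \pi/2]$ the quantity $\sin(4T\theta)/(4T\sin(2\theta))$ behaves more delicately (with a removable singularity at $\theta=\pi/2$), but in that range $\sin^2(\theta) > 1/2$, and one can verify directly that the average cannot fall below the desired constant (this is the standard analysis of Boyer et al. \cite{Boyer1998}). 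Combining the two cases yields a lower bound well above $0.09$.

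For part (iii), the key observation is that both $\refl{\mathcal{G}}$ and $\refl{\lrnst}$, and hence $D$, preserve the two-dimensional real subspace spanned by $\ket{g}$ and $\ket{b}$. Writing $p = \prob[X\sim\dsb]{(X,c(X)) \in G}$, we have explicitly
\begin{equation*}
    \ket{g} \;=\; \frac{1}{\sqrt{p}}\sum_{x\,:\,(x,c(x))\in G}\sqrt{\dsbof{x}}\,\ket{x\; c(x)}.
\end{equation*}
Thus after $D^N$ the state is still $\sin((2N+1)\theta)\ket{g} + \cos((2N+1)\theta)\ket{b}$, and measuring in the computational basis yields an outcome in $G$ precisely when the projection onto $\ket{g}$ is realised. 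The conditional distribution of the outcome is then given by $|\langle x\, c(x)\ket{g}|^2 = \dsbof{x}/p$ for $(x,c(x))\in G$, which is exactly the claimed distribution.

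The main obstacle is the case analysis in part (ii): the compact closed-form expression is straightforward, but bounding it cleanly over the full range $\theta \in [\arcsin\sqrt{\epsilon},\,\pi/2]$ requires handling the vanishing of $\sin(2\theta)$ near $\theta = \pi/2$, where the bound $|\sin(4T\theta)/(4T\sin(2\theta))| \leq 1/(4T|\sin(2\theta)|)$ becomes useless and must be replaced by a direct estimate using the fact that $\sin^2((2N+1)\theta)$ stays large for most $N$ in that regime.
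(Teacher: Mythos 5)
Your parts (i) and (iii) are correct and essentially match the paper, and your derivation of the closed-form success probability
\[
P[\text{success}] = \frac{1}{2} - \frac{\sin(4T\theta)}{4T\sin(2\theta)}
\]
is exactly the Boyer et al.\ formula that the paper invokes; your handling of the regime $\theta \in [\arcsin\sqrt{\epsilon},\,\pi/4]$ is correct (in fact slightly sharper than the paper's, which only uses $M\sin(2\theta)\geq 1$ to get $p_s \geq 1/4$).

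The gap is in the regime $\theta \in (\pi/4,\,\pi/2]$, which you defer to ``the standard analysis of Boyer et al.'' and suggest can be seen from $\sin^2\theta > 1/2$. That observation by itself does not close the argument: after applying $D^N$ the state can have arbitrarily small overlap with $\mathcal{G}$ for some $N$, so the pointwise lower bound on $\sin^2\theta$ does not bound the average of $\sin^2((2N+1)\theta)$. The paper resolves this with a two-case split that you should reproduce. For $\theta \in [\pi/4,\,(1/2-1/(4M))\pi]$ one uses the concavity/Jordan-type estimate $\sin(2\theta) \geq (\pi/2-\theta)/(\pi/4)$ to lower bound $4M\sin(2\theta)$, yielding $p_s \geq 1/2 - 4/\pi^2 > 0.09$. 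For the remaining sliver $\theta \in [(1/2-1/(4M))\pi,\,\pi/2]$ the trivial bound $|\sin(4M\theta)|\leq 1$ is indeed useless, but one can check the signs: $4M\theta \in [2M\pi - \pi,\,2M\pi]$ gives $\sin(4M\theta) \leq 0$ while $\sin(2\theta) \geq 0$, so the correction term is non-positive and $p_s \geq 1/2$. This sign observation, rather than a magnitude bound, is what handles the removable singularity you flag at $\theta = \pi/2$; you should include it explicitly rather than citing the result.
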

\begin{proof}\\
Part $(i)$: From the definition of the algorithm and Lemma \ref{lem:query_Grover}, the worst case number of oracle calls is $1+2(\lceil2/\sqrt{\epsilon}\rceil-1) = O(1/\sqrt{\epsilon})$.\\

Part  $(ii)$: Let $M = \lceil2/\sqrt{\epsilon}\rceil$, let $\theta$ be as in equation \eqref{eqn:angle_decomp} and let $\ps$ be the probability that the algorithm succeeds. Note that $\prob[X\sim \dsb]{(X,c(X))\in G}\geq \epsilon \Leftrightarrow \sin(\theta)\geq \sqrt{\epsilon}$. We use Lemma 2 (section 6) from \cite{Boyer1998}, which claims
\begin{equation}
    \ps = \frac{1}{2} - \frac{1}{4M}\frac{\sin(4M\theta)}{\sin(2\theta)}.
\end{equation}
For $\sin(\theta)\in[\sqrt{\epsilon}, 1/\sqrt{2}]$:\\
\begin{align}
    M &\geq \frac{2}{\sin(\theta)},\\
      &\geq \frac{1}{\sin(2\theta)},
\end{align}
and thus 
\begin{equation}
    \ps \geq \frac{1}{2}-\frac{1}{4} = \frac{1}{4} > 0.09.
\end{equation}
Note that for $\theta\in[\pi/4, \pi/2]$,
\begin{equation}
    \sin(2\theta)\geq\frac{\pi/2-\theta}{\pi/4},
\end{equation}
Thus for $\theta\in[\pi/4, (1/2-1/4M)\pi]$, we have that
\begin{align}
    p_s(\theta) &\geq \frac{1}{2} - \frac{1}{4M}\cdot \frac{4/\pi}{\pi/2 - (1/2-1/4M)\pi},\\
                &= \frac{1}{2} - \frac{4}{\pi^2} > 0.09.
\end{align}
Finally, for $\theta\in[(1/2-1/4M)\pi, \pi/2]$, note that $\sin(2\theta)\geq 0$ and $\sin(4M\theta) \leq 0$ so that $p_s(\theta) \geq 1/2 > 0.09$.\\

Part $(iii)$. This follows from the form of $D^n\lrnst$; the relative magnitude of the amplitudes in $\ket{g}$ is unchanged by the Grover operator $D$.
\end{proof}

We discuss how to combine the Grover subroutine with the algorithm of section \ref{sec:equiv_queries} to achieve a quantum learning complexity of equation \eqref{eqn:q_upp_bound} in section \ref{sec:upper_bound}. 

\section{Learning with imperfect equivalence queries}\label{sec:equiv_queries}
Equivalence queries are an alternative learning model for PAC learning. It was recently shown \cite{Gluch2021} that PAC learning with equivalence queries gives an exponential advantage over learning with labelled examples. In this section, we show how to use imperfect equivalence queries to PAC learn a concept class.

\begin{definition}
    An (ideal) equivalence query consists of submitting a candidate hypothesis $h$ for an underlying true concept $c$. If $h=c$ then we are told YES. Otherwise, we receive a labelled example $(x, c(x))$ where $c(x)\neq h(x)$ at random according to the distribution $\mathbb{P}(y) = \dsbof{y}/\dsbof{\{x:c(x)\neq h(x)\}}$. Such a labelled example where $h(x)\neq c(x)$ is called a counterexample.
\end{definition}

Equivalence queries are a very strong learning model, which is perhaps unrealistic. Thus, we assume we can only implement them probabilistically:\\

\begin{definition}
    An imperfect equivalence query consists of submitting a candidate hypothesis $h$ for the underlying concept $c$. In return we receive some labelled example $(x,c(x))$ with the following promises
    \begin{itemize}
        \item The distribution of $(X, c(X))$ \textit{conditional on being a counterexample} is the same as an ideal equivalence query.
        \item If $\dist{h}{c} \geq \epsilon$ then with some constant probability $p$ we receive a counterexample.
    \end{itemize}
\end{definition}

Note that we can tell whether our imperfect equivalence query failed or not - we can look at the result $(x, c(x))$ and check whether $h(x) = c(x)$. If they are equal, the equivalence query failed. Otherwise, it succeeded. Classically, we can implement an imperfect equivalence query using $1/\epsilon$ random labelled examples - we just sample $1/\epsilon$ times and see whether $c(x)\neq h(x)$ for any of our samples. On a quantum computer we can do this in $1/\sqrt{\epsilon}$ time using Grover's algorithm, as described in section \ref{sec:grover_sub} in Theorem \ref{thm:Grover_promises}.\\

We need one additional tool from classical learning theory to run our algorithm:\\

\begin{definition}
    Suppose we have a set of classifiers $\mathcal{H}\subseteq \class{\lset}$ and a distribution $\rho$ on $\mathcal{H}$. Then the weighted majority vote \cite{Masegosa2020}, $\wmv{\mathcal{H}}{\rho}\in\class{\lset}$ is defined such that it maximises
    \begin{equation}
        \prob[h\sim \rho]{\wmv{\mathcal{H}}{\rho}(x) = h(x)},
\end{equation}
    for every $x$ (ties can be broken arbitrarily).
\end{definition}

Suppose we have a classical algorithm $\mathcal{A}$ that uses $T_E(\epsilon, \delta, d)$ (ideal) equivalence queries to PAC learn a concept class $\ccc$. We show how to use $O(T_E+\log(1/\delta))$ imperfect equivalence queries to PAC learn the same concept class. \\

The full detail of the algorithm is given below in algorithm \ref{alg:imperfect_equiv_queries}. It works by running $\mathcal{A}$, replacing every equivalence query with repeated imperfect equivalence queries until one succeeds. We terminate if the learning algorithm $\mathcal{A}$ terminates or if we make a total of $R(T_E, \delta)$ imperfect equivalence queries. \\

We give some rough intuition for why the algorithm works before moving to prove so. If $\mathcal{A}$ terminates, then with high probability, it outputs an approximately correct hypothesis. If we pick $R$ large enough, then with high probability $T_E$ ideal queries to hypotheses $h_i$ with $\dist{h_i}{c}\geq \epsilon$ would all succeed in $<R/3$ imperfect equivalence queries. Thus, if the algorithm $\mathcal{A}$ does not terminate and we make $R$ total imperfect equivalence queries, with high probability, we spent $>2/3$ of our imperfect equivalence queries on hypotheses $h_i$ with $\dist{h_i}{c}<\epsilon$. Hence, if we take the weighted majority vote of all of the hypotheses we queried, weighted by the number of imperfect equivalence queries spent on each hypothesis, most of the vote will be decided by hypotheses that are close to the concept $c$. Thus, the weighted majority vote will also be close to $c$.\\

The full proof of why algorithm \ref{alg:imperfect_equiv_queries} works is given as two lemmas. Before these, we introduce some terminology.

\begin{definition}
    A transcript of a run of algorithm \ref{alg:imperfect_equiv_queries} is given by the list of hypotheses $\mathcal{H} = \{h_i\}$ that the algorithm queried along with a corresponding collection of natural numbers $n_i >0$, where $n_i$ is the number of imperfect equivalence queries spent on $h_i$.\\

    The time-spent distribution $\rho$ is the probability distribution on $\mathcal{H}$ given by $\rho(h_i) = n_i / \sum_i n_i$. \\

    Finally, $F = \{i: d(h_i, c) \geq \epsilon\}$ is called the ``feasible" set, where our imperfect equivalence query succeeds with probability at least $p$. Correspondingly $I = \{i : d(h_i, c) < \epsilon\}$\ is the ``infeasible" set, where there is no promise on the probability of success.
\end{definition}

Firstly, we show that with high probability that a bounded number of queries is spent on the feasible set\\

\begin{lemma}\label{lem:feasible_bound}
    With probability $\geq 1-\delta$ the total number of imperfect equivalence queries to feasible hypotheses is at most 
\begin{equation}
    2T_E/p + (1/2p^2)\log(1/\delta).
\end{equation} 
\end{lemma}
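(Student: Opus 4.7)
The plan is to reduce the claim to a standard Chernoff-type tail bound for the binomial distribution. Each imperfect equivalence query directed at a feasible hypothesis $h_i \in F$ returns a counterexample independently with probability at least $p$, so the number of imperfect queries spent on $h_i$ before a success is stochastically dominated by a geometric random variable $Y_i \sim \text{Geom}(p)$. Since the classical subroutine $\mathcal{A}$ issues at most $T_E$ equivalence queries in total, the total number of imperfect queries to feasible hypotheses is stochastically dominated by $S := Y_1 + Y_2 + \dots + Y_{T_E}$, where the $Y_i$ are taken independent.

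Next I exploit the duality between the negative binomial and the binomial. Thinking of the $Y_i$ as consecutive waiting times in a single $\text{Bernoulli}(p)$ process, $S$ is simply the number of trials needed to collect $T_E$ successes, so for every $t>0$
\begin{equation*}
    \mathbb{P}(S > t) \;=\; \mathbb{P}\bigl(\text{Bin}(t, p) < T_E\bigr).
\end{equation*}
I set $t := 2T_E/p + \log(1/\delta)/(2p^2)$, so that the mean $tp = 2T_E + \log(1/\delta)/(2p)$ comfortably exceeds $T_E$, and the task reduces to bounding the probability that $Y \sim \text{Bin}(t,p)$ falls below $T_E$.

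I then apply Hoeffding's inequality, $\mathbb{P}(Y \leq tp - s) \leq \exp(-2s^2/t)$, with deviation $s := tp - T_E = T_E + B$, where $B := \log(1/\delta)/(2p)$. An elementary expansion gives $s^2 = T_E^2 + 2 T_E B + B^2 \geq B(2T_E + B) = pBt$, and therefore $2s^2/t \geq 2pB = \log(1/\delta)$. Plugging into Hoeffding yields $\mathbb{P}(S > t) \leq \delta$, which is exactly the lemma.

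The proof is really a packaging exercise: the conceptual step is the stochastic-domination reduction to independent geometric random variables, and the computational step is the one-line algebraic check that the particular constants $2/p$ and $1/(2p^2)$ in the statement are precisely tuned so Hoeffding closes. I do not anticipate any substantive obstacle, only the care needed to make the constants balance cleanly.
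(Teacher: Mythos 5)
Your proof is correct and follows essentially the same route as the paper: both reduce the event to a lower tail of a $\text{Bin}(t,p)$ random variable (you via the negative-binomial/binomial duality, the paper via the same claim stated informally) and then close with Hoeffding's inequality and a short algebraic check that the chosen constants make the exponent at least $\log(1/\delta)$. Your stochastic-domination framing just makes explicit what the paper asserts in one sentence; the substance is identical.
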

\begin{proof}
A imperfect equivalence query of a feasible hypothesis has (by definition) a chance $\geq p$ of succeeding, and the individual imperfect equivalence queries are independent. Additionally, there are at most $T_E$ feasible hypotheses to query (since the classical algorithm makes at most $T_E$ total equivalence queries). Thus, the probability that we succeed on all the feasible hypotheses using at most $m$ imperfect queries feasible hypotheses is lower bounded by the probability of getting at least $T_E$ successes from a binomial distribution $B(m, p)$. Thus, the chance of failure is lower bounded by the chance of fewer than $T_E$ successes from $B(m, p)$. \\

Let $X\sim B(m, p)$. Applying Hoeffding's inequality \cite{Hoeffding1963}, for $m\geq T_E/p$ we see that
\begin{equation}
    \prob{X<t} \leq e^{-2m(p-T_E/m)^2}.
\end{equation}
Thus it is sufficient for
\begin{equation}
    2m\left(p - \frac{T_E}{m}\right)^2 \geq \log(1/\delta).
\end{equation}
In turn, it is sufficient that
\begin{equation}
    2mp^2 - 4pT_E \geq \log(1/\delta),
\end{equation}
whence we deduce our bound. \\
\end{proof}

Next we prove that if we make enough imperfect equivalence queries on infeasible hypotheses, the weighted majority vote of the transcript must be close to the underlying concept $c$

\begin{lemma}\label{lem:maj_vote_good}
    Suppose we spend at least $2R/3$ imperfect equivalence queries on infeasible hypotheses. Then the weighted majority vote $M$ of the transcript with the time-spent distribution has $\dist{M}{c} < 4\epsilon$.
\end{lemma}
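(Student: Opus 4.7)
The plan is to establish a pointwise upper bound on $\mathbbm{1}[M(x) \neq c(x)]$ in terms of a $\rho$-weighted count of disagreements by infeasible hypotheses, and then take the expectation over $X \sim \dsb$.

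First I would use the defining property of $M = \wmv{\mathcal{H}}{\rho}$: for every $x$, the value $M(x) \in \{0,1\}$ maximises $\prob[h \sim \rho]{M(x) = h(x)}$. Since the label is binary, if $M(x) \neq c(x)$ then the alternative choice $c(x)$ would be at least as good, forcing $\sum_i \rho(h_i)\,\mathbbm{1}[h_i(x) \neq c(x)] \geq 1/2$. Writing $\alpha := \rho(I) = \tfrac{1}{R}\sum_{i \in I} n_i$, the hypothesis of the lemma gives $\alpha \geq 2/3$, so $\rho(F) = 1-\alpha \leq 1/3$.

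Next I would split the weighted disagreement between the feasible and infeasible hypotheses. The feasible part contributes at most $\rho(F) = 1-\alpha$ pointwise, so whenever $M(x)\neq c(x)$ the infeasible contribution must satisfy
\begin{equation*}
\sum_{i\in I} \rho(h_i)\, \mathbbm{1}[h_i(x)\neq c(x)] \;\geq\; \tfrac{1}{2} - (1-\alpha) \;=\; \alpha - \tfrac{1}{2}.
\end{equation*}
Rearranging gives the pointwise bound
\begin{equation*}
\mathbbm{1}[M(x)\neq c(x)] \;\leq\; \frac{1}{\alpha - 1/2}\sum_{i\in I}\rho(h_i)\,\mathbbm{1}[h_i(x)\neq c(x)].
\end{equation*}

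Taking expectations over $X\sim\dsb$ and using $\dist{h_i}{c}<\epsilon$ for every $i\in I$ yields
\begin{equation*}
\dist{M}{c} \;\leq\; \frac{1}{\alpha-1/2}\sum_{i\in I}\rho(h_i)\,\dist{h_i}{c} \;<\; \frac{\alpha\,\epsilon}{\alpha-1/2}.
\end{equation*}
The map $\alpha \mapsto \alpha/(\alpha-1/2)$ is decreasing on $(1/2,\infty)$, so for $\alpha\geq 2/3$ it is bounded above by $(2/3)/(1/6) = 4$, giving $\dist{M}{c}<4\epsilon$. The main obstacle is securing the sharp constant $4$: a naive bound that either appeals to $\dist{M}{c}\leq 2\,\mathbb{E}_{h\sim\rho}\dist{h}{c}$ (which fails because feasible hypotheses can be arbitrarily far from $c$) or replaces $\sum_{i\in I}\rho(h_i)=\alpha$ with the loose $1$ only yields $6\epsilon$. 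The key is to bound the feasible mass \emph{pointwise} by $\rho(F)$ rather than by an expected disagreement, and to retain the factor $\alpha$ in the numerator alongside the tight pointwise lower bound $\alpha-1/2$ in the denominator, after which the worst case is exactly $\alpha=2/3$.
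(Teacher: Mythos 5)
Your proof is correct and follows essentially the same route as the paper's: both arguments observe that a wrong weighted majority vote at $x$ forces a nontrivial fraction of the infeasible $\rho$-mass to disagree with $c(x)$, then average over $X\sim\dsb$ (linearity of expectation / Markov's inequality) and use $\dist{h_i}{c}<\epsilon$ on $I$ to conclude. The only cosmetic difference is that the paper works with the conditioned distribution $\rho'$ and the fixed threshold $1/4$, whereas you keep $\alpha=\rho(I)$ as a parameter and apply the pointwise bound directly, which is mathematically equivalent and recovers the same constant $4$ at the worst case $\alpha=2/3$.
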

\begin{proof}
Fix the transcript $h_1, \dots h_k$. Let $\rho$ be the time-spent distribution and let $\rho'$ be the time-spent distribution conditioned on the infeasible set. That is, for $i\in I$, $\rho'(h_i) = \rho(h_i)/\rho(I)$. Similarly let $\tilde{\rho}$ the the time-spent distribution conditioned on the feasible set. We first show that if the infeasible set overwhelmingly votes for a bit $y$, then the whole transcript must also vote for that $y$. To be precise, suppose that $\prob[h\sim \rho']{h(x) = y} > 3/4$, then
\begin{align}
    \prob[h \sim \rho]{h(x) = y} &= \prob[h\sim \rho']{h(x)=y}\prob[h \sim \rho]{h\in I} + \prob[h\sim \tilde{\rho}]{h(x)=y}\prob[h \sim \rho]{h\in F},\\
    &> \frac{3}{4}\cdot\frac{2}{3}, \\
    &= \frac{1}{2}.
\end{align}
Letting $M=\wmv{\mathcal{H}}{\rho}$, we deduce (inspired by \cite{Masegosa2020}) that
\begin{align}
    \prob[X\sim \mathcal{D}]{M(X)\neq c(X)} &\leq \prob[X\sim \mathcal{D}]{\prob[h \sim \rho']{h(X)\neq c(X)}\geq \frac{1}{4}},\\
    \text{Markov's inequality,}& \leq 4 \mathbb{E}_{X\sim \mathcal{D}}\mathbb{E}_{h \sim \rho'}[\mathbbm{1}_{\{h(X)\neq c(X)\}}],\\
    & = 4\mathbb{E}_{h \sim \rho'}[\dist{h}{c}], \\
    \text{definition of infeasible set,}& < 4\epsilon
\end{align}
\end{proof}

We can now prove the performance of our algorithm

\begin{theorem}
    Let the maximum number of imperfect equivalence queries of algorithm \ref{alg:imperfect_equiv_queries} be 
    \begin{equation}
        R(T_E(\epsilon, \delta, d),\delta) = 6T_E(\epsilon, \delta, d)/p + + (3/2p^2)\log(1/\delta),
    \end{equation}
    then algorithm \ref{alg:imperfect_equiv_queries} produces a hypothesis $h$ with $\dist{h}{c} \leq 4 \epsilon$ with probability at least $1-2\delta$.
\end{theorem}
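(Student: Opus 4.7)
The plan is to perform a case analysis depending on whether the algorithm halts because the simulated classical learner $\mathcal{A}$ terminates, or because the imperfect equivalence query budget $R$ is exhausted. The key structural observation is that, by the first property of imperfect equivalence queries, conditional on a query succeeding the returned counterexample has exactly the same distribution as one from an ideal equivalence query. Since we can test success of each imperfect query locally (by checking whether $h(x)=c(x)$), the subsequence of successful queries faithfully simulates a sequence of ideal equivalence queries, so long as we never exhaust the budget before $\mathcal{A}$ halts.

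I would then introduce two ``good events''. Let $E_1$ be the event that $\mathcal{A}$, when fed ideal equivalence queries, outputs an $\epsilon$-approximately correct hypothesis; by definition of $T_E$ we have $\Pr[E_1] \geq 1-\delta$. Let $E_2$ be the event of Lemma \ref{lem:feasible_bound}, namely that the number of imperfect equivalence queries spent on feasible hypotheses is at most $2T_E/p + (1/2p^2)\log(1/\delta) = R/3$; by that lemma $\Pr[E_2] \geq 1-\delta$. A union bound gives $\Pr[E_1 \cap E_2] \geq 1 - 2\delta$, so it suffices to show the algorithm's output is $4\epsilon$-close to $c$ whenever $E_1 \cap E_2$ occurs.

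On $E_1 \cap E_2$ I split into two cases. In Case 1, $\mathcal{A}$ terminates before the budget is reached; by the observation above the simulation is statistically identical to a run of $\mathcal{A}$ with ideal equivalence queries, so event $E_1$ forces the output $h$ to satisfy $\dist{h}{c} \leq \epsilon \leq 4\epsilon$. In Case 2, the algorithm halts because the budget $R$ was reached; event $E_2$ then guarantees that at most $R/3$ of the $R$ imperfect queries were spent on feasible hypotheses, so at least $2R/3$ were spent on infeasible ones. Lemma \ref{lem:maj_vote_good} immediately gives $\dist{M}{c} < 4\epsilon$ for the weighted majority vote $M$ output by the algorithm.

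The main conceptual obstacle is really just the first observation: a careful check that interleaving failed imperfect queries with successful ones does not bias the distribution of the information fed into $\mathcal{A}$. Because each imperfect query is independent and we condition only on the observable ``success'' event, the successful queries form an i.i.d.\ sequence of ideal-equivalence-query responses, validating the coupling to $\mathcal{A}$'s ideal run. Everything else is the union bound and the two lemmas already in hand.
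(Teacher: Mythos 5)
Your proof is correct and follows essentially the same route as the paper's: condition on the event of Lemma~\ref{lem:feasible_bound}, then split into the case that $\mathcal{A}$ terminates (where the successful imperfect queries faithfully simulate ideal equivalence queries, so the $T_E$-learner guarantee applies) versus the case that the budget is exhausted (where Lemma~\ref{lem:maj_vote_good} applies). The only cosmetic difference is that you combine the two failure events by a union bound, whereas the paper multiplies $(1-\delta)^2 \geq 1-2\delta$; both give the same conclusion, and your explicit remark about the coupling between the filtered imperfect-query transcript and an ideal run of $\mathcal{A}$ is a slightly more careful phrasing of a point the paper leaves implicit.
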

\begin{proof}
By Lemma \ref{lem:feasible_bound}, with probability $\geq 1- \delta$ we spend at most $R/3$ imperfect equivalence queries on feasible hypotheses - suppose this happens. If we succeed in an equivalence query for every hypothesis required by $\mathcal{A}$ then with probability at least $1-\delta$, $\mathcal{A}$ outputs a hypothesis $h$ with $\dist{h}{c} \leq \epsilon$. Otherwise, we spend at least $2R/3$ imperfect equivalence queries on infeasible hypotheses (as we assumed the feasible ones took at most $R/3$ imperfect equivalence queries) and then by Lemma \ref{lem:maj_vote_good} the weighted majority vote $\wmv{\mathcal{H}}{\rho}$ has $\dist{\wmv{\mathcal{H}}{\rho}}{c}< 4\epsilon$. Thus algorithm \ref{alg:imperfect_equiv_queries} outputs a $4\epsilon$-approximately correct hypothesis with probability at least $(1-\delta)^2 \geq 1- 2\delta$.
\end{proof}

\begin{algorithm}
    {$\delta >0, \epsilon > 0$ (the usual PAC parameters) and $\mathcal{A}$ a classical equivalence query learning algorithm with worst case query complexity $T_E>0$}
    {Hypothesis $h\in\class{\lset}$}\label{alg:imperfect_equiv_queries}
    {
        \item{Set the maximum imperfect equivalence query budget as $R = 6T_E/p  + (3/2p^2)\log(1/\delta)$. If $R$ total imperfect equivalence queries have ever been made, go to step 3}
        
        \item {Run $\mathcal{A}$, whenever it requires an equivalence query to a hypothesis $h$, repeatedly make imperfect equivalence queries until one succeeds. If $\mathcal{A}$ terminates, output the output of $\mathcal{A}$}
        
        \item{Let $\mathcal{H}=\{h_1,\dots,h_k\}$ be the set of hypothesis we ran imperfect equivalence queries on (so that $k\leq T_E$). Suppose we spent $n_i$ imperfect equivalence queries on $h_i$ (so that $\sum n_i = R$). Let $\rho(h_i) = n_i/N$ and output $h=\wmv{\mathcal{H}}{\rho}$}
    }
\end{algorithm}

\section{Upper bound on quantum learning complexity}\label{sec:upper_bound}
Here, we combine the results of sections \ref{sec:grover_sub} and \ref{sec:equiv_queries} to give an upper bound on $T_O$, the learning complexity of PAC learning with a state preparation oracle $\orac$ (and its inverse). \\

Suppose that it takes $E(\epsilon)$ queries to perform an imperfect equivalence query for a hypothesis $h$. If we have a classical equivalence learning algorithm $\mathcal{A}$ with a query complexity of $T_E(\epsilon, \delta, d)$, then we can use algorithm \ref{alg:imperfect_equiv_queries} of section \ref{sec:equiv_queries} to get a quantum PAC learning algorithm with learning complexity
\begin{equation}\label{eqn:implicit_ub}
    E(\epsilon/4)R(T_E(\epsilon/4, \delta/2, d),\delta/2).
\end{equation}
The current best known $T_E$ \cite{Gluch2021} has a worst-case query complexity of
\begin{equation}
    T_E = O\left(\left[d+\log(\frac{1}{\delta})\right]\log^9\left(\frac{1}{\epsilon}\right)\right).
\end{equation}
If we use the Grover subroutine (section \ref{sec:grover_sub} algorithm \ref{alg:Grover_Search}) with $G=\{(x,1-h(x)):x\in \lset\}$ to implement the imperfect equivalence queries, we find $E(\epsilon) = O(1/\sqrt{\epsilon})$. Substituting these $T_E$ and $E$ into the bound from equation \eqref{eqn:implicit_ub}, we get an upper bound of
\begin{equation}
    T_O = O\left(\frac{1}{\sqrt{\epsilon}}\left[d+ \log(\frac{1}{\delta})\right]\log^9\left(\frac{1}{\epsilon}\right)\right),
\end{equation}
which is a square-root improvement (up to polylogarithmic factors) over the classical PAC learning sample complexity of equation \eqref{eqn:class_complexity}.

\section{Lower bound on quantum learning complexity}\label{sec:lower_bound}
In this section, we prove a lower bound on quantum PAC learning with a state preparation oracle (and its inverse). We show that $\Omega(d/\sqrt{\epsilon})$ oracle calls are necessary. \\

Suppose we have a concept class $\ccc$ with VC dimension $d+1$. Then there is a set $Z$ of size $d+1$ in $\lset$ which is shattered by $\ccc$. We pick a marked element $x_0\in Z$ and let $Y=Z\setminus \{x_0\}$. We define our distribution $\dsb$ as a perturbed delta-function, the standard distribution used to prove lower bounds in learning:
\begin{equation}
    \dsbof{x} = \begin{cases}
        0, &\text{if } x\notin Z,\\
        1-4\epsilon, &\text{if } x = x_0,\\
        4\epsilon/d, &\text{if } x \in Y.
    \end{cases}
\end{equation}
We also restrict our concept class to $\widetilde{\ccc}=\{c\in\ccc:c(x_0) = 0\}$. If our PAC algorithm works on $\ccc$, it will certainly work on $\widetilde{\ccc}$. Since our distribution is restricted to $Z$ we need only identify the behaviour of our concept on $Z$. Thus, we can index our concepts by bit-strings $u\in\{0,1\}^d$ and index them with elements of $Y$. To be precise, we identify a concept $c\in\widetilde{\ccc}$ with a bit-string $u\in\{0,1\}^d$, where $u_y=c(y)$. \\

For a given bit-string $u\in\{0,1\}^d$, the state preparation oracle acts as
\begin{equation}
    \orac[u]\inst = \sqrt{1-4\epsilon}\ket{x_0\; 0} + \sqrt{\frac{4\epsilon}{d}}\sum_{x\in Y}\ket{x\; u_x}.
\end{equation}

Our main approach is to reduce to the following fact from Lemma 51 in \cite{vanApeldoorn2023}. \\
\begin{lemma}\label{lem:phase_oracle_bound}
    Let $u\in \{0,1\}^d$ be a bit string, and let $O_u$ be a weak phase-kickback oracle, that is
    \begin{equation}
        O_u \ket{x} = e^{2i\eta u_x}\ket{x}.
    \end{equation}
    Then recovering more than $3/4$ of the bits of $u$ with high probability requires at least $\Omega(d/\eta)$ calls to $O_u$, its inverse or controlled versions of these.\\    
\end{lemma}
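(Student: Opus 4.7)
The plan is to prove the lower bound by adapting the polynomial method to the weak-phase-oracle setting, exploiting the fact that each call to $O_u$ only perturbs amplitudes by $O(\eta)$ per bit of $u$ that it sees. First I would record the standard structural observation: after $T$ calls to $O_u$, $O_u^\dagger$, or controlled versions of these, every amplitude in the algorithm's final state is a polynomial of degree at most $T$ in the variables $y_x := e^{2i\eta u_x}$, $x \in [d]$. Consequently, any measurement probability $p(u)$ is a polynomial of degree at most $2T$ in the $y_x$'s. Substituting $y_x = 1 + u_x(e^{2i\eta}-1)$ for $u_x \in \{0,1\}$ turns $p(u)$ into a multilinear polynomial on $\{0,1\}^d$ of degree at most $2T$, and each Fourier character $\hat{p}(S)$ inherits a factor of $(e^{2i\eta}-1)^{|S|} = O(\eta^{|S|})$ from the substitution.

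Next I would turn this structural bound into information about the recoverable bits. Let $\hat{U} = (\hat{U}_1,\dots,\hat{U}_d)$ be the algorithm's output. For each coordinate $x \in [d]$, $p_x(u) := \Pr[\hat{U}_x = 1 \mid u]$ is a polynomial of the form described above, and the "linear" Fourier coefficient $\hat{p}_x(\{x\})$ controls how strongly $p_x$ depends on $u_x$. The decay statement above pins $|\hat{p}_x(\{x\})| = O(T\eta)$. On the other hand, recovering bit $x$ with probability bounded away from $1/2$ for a uniformly random $u$ forces $|\hat{p}_x(\{x\})| = \Omega(1)$, which already recovers the single-bit bound $T = \Omega(1/\eta)$. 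To extract the additional factor of $d$, I would argue across all $d$ tasks simultaneously: by Fano's inequality and a standard rate-distortion argument, correctly recovering more than $3d/4$ bits with constant probability requires the Holevo information (equivalently, a sum of per-bit correlations $\sum_x |\hat{p}_x(\{x\})|$) to be $\Omega(d)$. Summing the per-bit bounds with a careful accounting — using that the $d$ output polynomials $p_x$ are all derived from the same $T$-query state, and that the total $\ell_2$ mass of any single-character coefficients across coordinates is controlled by a "query-mass" quantity bounded by $O((T\eta)^2)$ only when aggregated with a $\sqrt{d}$ factor — yields $T\eta = \Omega(d)$ once one avoids the Cauchy--Schwarz loss.

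The main obstacle is exactly that last aggregation step: a naive hybrid or Cauchy--Schwarz argument relates distinguishability to $\eta\sqrt{T}$ and ends up proving only the weaker $T = \Omega(\sqrt{d}/\eta)$ bound. To sharpen this to $\Omega(d/\eta)$ one must exploit the full polynomial structure (not only its degree) — for instance via a continuous-time reformulation in which $T$ discrete weak queries are identified with evolution under the Hamiltonian $H_u = \sum_x u_x \ket{x}\bra{x}$ for total time $\Theta(T\eta)$, and then invoking a continuous-time / Hamiltonian-learning lower bound that charges $\Omega(1)$ units of evolution time per bit of $u$ extracted. Formalising this continuous-time reduction, or equivalently proving the tight $\ell_1$ bound $\sum_x |\hat{p}_x(\{x\})| = O(T\eta)$ directly from the weak-oracle polynomial structure, is the key technical step and is exactly where Lemma 51 of \cite{vanApeldoorn2023} is invoked in our application.
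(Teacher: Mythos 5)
The paper gives no proof of this lemma at all --- it is imported verbatim as Lemma~51 of \cite{vanApeldoorn2023} --- and your proposal ultimately does the same thing, explicitly deferring the one genuinely hard step (upgrading the per-bit bound $T=\Omega(1/\eta)$ to the aggregate $T=\Omega(d/\eta)$, i.e.\ the $\ell_1$ bound $\sum_x|\hat p_x(\{x\})|=O(T\eta)$) to that citation. The polynomial-method scaffolding you build around it is consistent with the statement but is not load-bearing and would not constitute a proof on its own: the degree bound together with the $(e^{2i\eta}-1)^{|S|}$ factor does not bound the linear Fourier coefficients, since the coefficients of the degree-$2T$ polynomial in the variables $y_x$ are not individually controlled, which is exactly why, as you correctly observe, naive arguments stall at $\Omega(\sqrt{d}/\eta)$.
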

\begin{proof}
    See \cite{vanApeldoorn2023}
\end{proof}

We will use calls to controlled versions of $O_u$ (denoted $c-O_u$) to implement the PAC state generation oracle $\orac[u]$. We fix $\eta\in[0,\pi/2]$ such that $\sin(\eta) = \sqrt{4\epsilon}$. \\

\begin{lemma}\label{lem:po_to_PACo}
    One can implement $\orac[u]$ using one call to $c-O_u$, one to $c-O_u^\dag$ and two qubit-ancillae. 
\end{lemma}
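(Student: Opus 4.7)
The strategy is to use phase kickback to convert the weak phase oracle $c$-$O_u$ into a controlled rotation on the label register $B$, exploiting the identity $\sin\eta = \sqrt{4\epsilon}$ so that the amplitudes produced by the rotation coincide with those of the quantum sample $\lrnst$. The two ancilla qubits $C, D$ and the second oracle call $c$-$O_u^\dag$ are then used jointly to absorb the remaining ``leakage'' amplitudes into a tensor factor on $CD$, ensuring that the reduced density on $AB$ is exactly $\dyad{\psi_c}$.

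Concretely, I would first prepare (with no oracle call) a $u$-independent state on $ABCD$ whose $A$-marginal equals $\dsb$, with $B$ in $\ket{0}$ and the two ancillas in a chosen entangled state whose amplitudes are dictated by $\sin^2\eta + \cos^2\eta = 1$. I would then use the first oracle call, via the standard phase-kickback gadget, to implement the controlled rotation $R_Y(2\eta u_x)$ on $B$ conditioned on $A = x$: conjugating $c$-$O_u$ (controlled on $B$, target $A$) by the single-qubit Cliffords $S^\dag H$ and $HS$ on $B$, and using the identity $S^\dag H Z H S = Y$, turns the diagonal phase $\operatorname{diag}_B(1, e^{2i\eta u_x})$ (conditional on $A=x$) into, up to a global $e^{i\eta u_x}$, the $Y$-rotation $e^{i\eta u_x Y_B}$. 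This is exactly the operation needed to place amplitude proportional to $\sin(\eta u_x) = \sqrt{4\epsilon}\,u_x$ on $\ket{B=1}$ whenever $u_x = 1$.

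The main obstacle is that a single weak-phase rotation, when applied to a $B=0$ input, leaves an unwanted $\cos\eta$-weighted amplitude on $\ket{x, 0}$ whenever $u_x = 1$, whereas $\lrnst$ should have no such amplitude; moreover the raw rotation amplitudes do not simultaneously match the target $\sqrt{4\epsilon/d}$ for both values of $u_x$. I would deal with this using the second oracle call $c$-$O_u^\dag$: by controlling it on a combination of $C$ and $D$ chosen so that it acts precisely on the ``wrong'' branch of the post-rotation state, one can coherently shift the unwanted amplitude into an orthogonal ancilla sub-branch, so that it disappears from the reduced density on $AB$. Equivalently, the ancillas are set up so that the combined action of $c$-$O_u$, $c$-$O_u^\dag$ and the intervening local unitaries factorises the final state as $\lrnst_{AB} \otimes \ket{\phi_u}_{CD}$ for some (possibly $u$-dependent) ancilla state $\ket{\phi_u}$. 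Verifying that the chosen control pattern achieves this factorisation for every $u\in\{0,1\}^d$ is the technically delicate heart of the construction; it reduces to a bookkeeping computation involving only the amplitudes $\sin\eta$, $\cos\eta$, $\sin\eta\cos\eta$ and $\sin^2\eta = 4\epsilon$, and uses no deeper idea than the phase kickback of the first step.
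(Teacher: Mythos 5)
Your proposal captures the right ingredients (two ancillae, phase kickback, the identity $\sin\eta=\sqrt{4\epsilon}$) but misses the single step that makes the paper's construction work, and as a result it is not clear the plan can succeed at all.

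You begin from a state whose $A$-marginal already equals $\dsb$, so $x_0$ carries amplitude $\sqrt{1-4\epsilon}$ from the start, and you try to push amplitude on the label register $B$ from $\ket{0}$ to $\ket{1}$ whenever $u_x=1$. But each call to $c$-$O_u$ or $c$-$O_u^\dag$, after Clifford conjugation, gives a rotation of angle at most $\eta$ on $B$; two calls give at most $2\eta$. Since $\eta = \arcsin\sqrt{4\epsilon}$ is small, you cannot get $\ket{x\,0}\mapsto\ket{x\,1}$ (a $\pi/2$ rotation) for the $x$ with $u_x=1$: the amplitude you place on $\ket{x\,1}$ is $O(\sin\eta\,\sqrt{\dsbof{x}})$, off from the target $\sqrt{\dsbof{x}}$ by a factor $\sin\eta = \sqrt{4\epsilon}$. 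Controlling $c$-$O_u^\dag$ on the ancillae still only moves phases on $A$; it cannot amplify the $B=1$ branch. The description of ``shifting the unwanted amplitude into an orthogonal ancilla sub-branch'' doesn't fix this, since the unwanted $\cos\eta$-weighted mass on $\ket{x\,0}$ is essentially \emph{all} the mass on $\ket{x}$; sending it into a garbage ancilla branch would leave the surviving $A$-marginal nowhere near $\dsb$.

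The paper sidesteps this by \emph{not} fixing the $A$-marginal to be $\dsb$ at the outset. It starts with the uniform superposition over $Y$ only (no amplitude on $x_0$), uses $c$-$O_u^\dag$ and $c$-$O_u$ on two complementary branches of an ancilla superposition to imprint phases $e^{\pm i\eta\hat u_x}$, and then Hadamard/$S^\dag$ manipulations turn this into $\cos\eta\ket{0\,0}+\sin\eta\ket{u_x\,1}$ on the two ancillae. The decisive step you are missing is that the $\cos\eta$ ``failure'' branch is then unitarily reassigned to $\ket{x_0}$, exploiting $\cos\eta=\sqrt{1-4\epsilon}=\sqrt{\dsbof{x_0}}$, while the $\sin\eta$ branch carries $\ket{x\,u_x}$ with the correct weight $\sin\eta/\sqrt d = \sqrt{4\epsilon/d}$. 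This is exactly the perturbed-delta structure doing the work, and it is what lets two $O(\eta)$-strength queries produce the full-strength state $\lrnst$.

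Finally, leaving the trailing ancilla in a ``possibly $u$-dependent'' state $\ket{\phi_u}$ is itself a gap: the lemma must produce a legitimate $\orac[u]$ on a fixed extended input $\inst$, and the downstream lower-bound argument replaces calls to $\orac[u]$ and $\orac[u]^\dag$ by this implementation. If the ancilla ends $u$-dependent you have not exhibited a valid PAC oracle (the ``sample state'' would carry extra $u$-dependent structure outside $\lrnst$), and you would also have no way to supply the correct ancilla state $\ket{\phi_u}$ when the algorithm next calls $\orac[u]^\dag$. The paper takes care to leave the spare ancilla in the fixed, $u$-independent state $\ket{+}$; you should establish the same.
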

\begin{proof}
First, it is convenient to shift the phase to have a $\pm$ symmetry. Define a constant phase gate as\\
\begin{equation}
    P_\alpha \ket{x} = e^{i\alpha} \ket{x}.
\end{equation}
Then let
\begin{equation}
    \pho = P_{\eta}O_u^\dag,
\end{equation}
so that
\begin{equation}
    \pho \ket{x} = e^{i\eta\hat{u}_x}\ket{x},
\end{equation}
where
\begin{equation}
    \hat{u}_x = (-1)^{u_x}.
\end{equation}

We start by generating a uniform superposition of indices with the two-qubit ancillae in the $\ket{+}$ state:
\begin{equation}\label{eqn:intial_superposition}
    \frac{1}{2\sqrt{d}}\sum_{x\in Y}\ket{x}[\ket{00} + \ket{01} + \ket{10} + \ket{11}].
\end{equation}
We next apply 4 controlled gates - either $c-P_{\eta}$, $c-P_{-\eta}$ $c-\pho$ and $c-\pho^\dag$, such that each term in the superposition in equation \eqref{eqn:intial_superposition} picks up a different phase:
\begin{equation}
    \mapsto \frac{1}{2\sqrt{d}}\sum_{x\in Y}\ket{x}\left[e^{i\eta}\ket{00} + e^{-i\eta}\ket{01} + e^{i\eta\hat{u}_x}\ket{10} + e^{-i\eta\hat{u}_x}\ket{11}\right].
\end{equation}
Note that this requires two calls to singly controlled versions of the oracle - we can implement a double-controlled version by using a CCNOT (Toffoli) gate followed by a controlled oracle. Next, we apply a Hadamard gate to the second qubit register
\begin{equation}
    \mapsto \frac{1}{\sqrt{2d}}\sum_{x\in Y}\ket{x}\left[\ket{0}(\cos(\eta)\ket{0}+i\sin(\eta)\ket{1}) + \ket{1}(\cos(\eta \hat{u}_x)\ket{0}+i\sin(\eta\hat{u}_x)\ket{1}) \right].
\end{equation}
We then apply $S^\dag$ to the second qubit register (to remove the factors of $i$). We also use the even/odd ness of cos/sin to regroup the terms:
\begin{equation}
    \mapsto \frac{1}{\sqrt{2d}}\sum_{x\in Y}\ket{x}\left[\cos(\eta)(\ket{0}+\ket{1})\ket{0} + \sin(\eta)(\ket{0}+\hat{u}_x\ket{1})\ket{1}\right].
\end{equation}
We then apply a Hadamard gate to the first qubit register:
\begin{equation}
    \mapsto \cos(\eta)\left(\frac{1}{\sqrt{d}}\sum_{x\in Y} \ket{x} \right)\ket{0 0} + \sin(\eta)\left(\frac{1}{\sqrt{d}}\sum_{x\in Y} \ket{x\; u_x} \right)\ket{1}
\end{equation}
Conditional on the final qubit being in the state $\ket{0}$, we apply a unitary to the first register that maps the uniform superposition over $Y$ into the state $\ket{x_0}$:
\begin{equation}
    \mapsto \cos(\eta)\ket{x_0\; 0\; 0} + \sin(\eta)\left(\frac{1}{\sqrt{d}}\sum_{x\in Y} \ket{x\; u_x} \right)\ket{1}
\end{equation}
Finally, conditional on the first register not being in the state $\ket{x_0}$, we apply an $X$ gate to the second qubit register,followed by an $H$ gate on the second qubit register:
\begin{equation}
    \mapsto \left[\cos(\eta)\ket{x_0\; 0} + \sin(\eta)\left(\frac{1}{\sqrt{d}}\sum_{x\in Y} \ket{x\; u_x} \right)\right]\ket{+}
\end{equation}
But by the definition of $\eta$, we see that this is exactly equal to the action of the PAC oracle:
\begin{equation}
    (\orac[u]\inst)\ket{+}
\end{equation}
\end{proof}

We thus deduce our bound

\begin{theorem}
    $T_O = \Omega\left(\frac{d}{\sqrt{\epsilon}}\right)$
\end{theorem}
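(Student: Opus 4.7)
The plan is to combine the two lemmas already proven (Lemma \ref{lem:phase_oracle_bound} and Lemma \ref{lem:po_to_PACo}) via the reduction sketched at the end of the technical overview. Concretely, I want to show that any $(\epsilon,\delta)$-quantum PAC learner for $\ccc$ making $T$ queries to $\orac$ or $\orac^\dag$ yields an algorithm that, using $O(T)$ queries to $c\text{-}O_u$ and $c\text{-}O_u^\dag$, recovers at least $3d/4$ bits of an arbitrary $u\in\{0,1\}^d$ with high probability. Setting $\sin(\eta)=\sqrt{4\epsilon}$ so that $\eta=\Theta(\sqrt{\epsilon})$ as $\epsilon\to 0$, Lemma \ref{lem:phase_oracle_bound} will then force $T=\Omega(d/\sqrt{\epsilon})$.

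First, I would adopt the hard instance already set up in Section \ref{sec:lower_bound}: pick a shattered set $Z=\{x_0\}\cup Y$ with $|Y|=d$, use the perturbed delta-function distribution $\dsb$ on $Z$, restrict to $\widetilde{\ccc}=\{c\in\ccc:c(x_0)=0\}$, and identify each $c\in\widetilde{\ccc}$ with the bit string $u\in\{0,1\}^d$ given by $u_y=c(y)$. Any $(\epsilon,\delta)$-learner for $\ccc$ is in particular an $(\epsilon,\delta)$-learner on this distribution and on $\widetilde{\ccc}\subseteq\ccc$. By Lemma \ref{lem:po_to_PACo}, each of its $T$ calls to $\orac[u]$ or $\orac[u]^\dag$ can be simulated by $O(1)$ queries to controlled versions of $O_u$ (together with concept-independent unitaries and a constant number of ancillae), giving an algorithm with $O(T)$ total queries to $c\text{-}O_u$ and $c\text{-}O_u^\dag$.

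Next I would translate the PAC success criterion into a bit-recovery criterion. For a hypothesis $h$,
\begin{equation*}
\dist{h}{c}=(1-4\epsilon)\,\mathbbm{1}[h(x_0)\neq 0]+\frac{4\epsilon}{d}\,\bigl|\{y\in Y:h(y)\neq u_y\}\bigr|.
\end{equation*}
For $\epsilon$ smaller than some absolute constant (e.g.\ $\epsilon<1/8$), the condition $\dist{h}{c}\leq\epsilon$ forces $h(x_0)=0$ (otherwise the first term alone exceeds $\epsilon$) and hence $|\{y\in Y:h(y)\neq u_y\}|\leq d/4$. Therefore, with probability at least $1-\delta$, the bit string $\tilde{u}\in\{0,1\}^d$ defined by $\tilde{u}_y=h(y)$ agrees with $u$ on at least $3d/4$ coordinates.

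Finally, applying Lemma \ref{lem:phase_oracle_bound} with $\eta$ as above gives $O(T)=\Omega(d/\eta)=\Omega(d/\sqrt{\epsilon})$, so $T=\Omega(d/\sqrt{\epsilon})$ as claimed (for $\epsilon$ large, the bound is trivial since $T\geq 1$). The only real wrinkle is the bookkeeping in steps one and three: one must verify that the constant-probability guarantee of the PAC learner lines up with the ``high probability'' hypothesis of Lemma \ref{lem:phase_oracle_bound}, and that restricting from $\ccc$ to $\widetilde{\ccc}$ does not interfere with the learner's guarantees. Since the theorem explicitly quantifies over ``sufficiently small constant $\delta>0$,'' this is handled by taking $\delta$ to be any constant strictly less than the failure threshold allowed by Lemma \ref{lem:phase_oracle_bound}, and the rest of the argument is essentially absorbing constants.
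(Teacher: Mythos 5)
Your proposal is correct and follows essentially the same route as the paper's own proof: set up the perturbed delta-function hard instance on a shattered set, restrict to concepts with $c(x_0)=0$, simulate $\orac[u]$ via Lemma~\ref{lem:po_to_PACo}, and invoke Lemma~\ref{lem:phase_oracle_bound} with $\eta=\arcsin\sqrt{4\epsilon}=\Theta(\sqrt{\epsilon})$. Your explicit computation showing that $\dist{h}{c}\le\epsilon$ forces $h(x_0)=0$ and agreement with $u$ on at least $3d/4$ coordinates spells out a step the paper compresses into ``by construction of our distribution, it must agree on at least $3/4$ of the bits of $u$,'' but the argument is the same.
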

\begin{proof}
We can replace every call to $\orac[u]$ (or its inverse) in our PAC algorithm with the unitary process described in Lemma \ref{lem:po_to_PACo}, which requires a constant number of calls to (a controlled) $O_u$ (or its inverse). If the PAC algorithm outputs a correct hypothesis, then by construction of our distribution, it must agree on at least $3/4$ of the bits of $u$. Thus, the algorithm replaced with calls to $O_u$ (and its inverse) satisfies the conditions of Lemma \ref{lem:phase_oracle_bound}, and thus it must use at least $\Omega(d/\eta)$ calls to $O_u$. Hence, we reach a lower bound of 
\begin{equation}
    T_O = \Omega\left(\frac{d}{\arcsin{\sqrt{4\epsilon}}}\right) = \Omega\left(\frac{d}{\sqrt{\epsilon}}\right).
\end{equation}
\end{proof}

Note that our lower bound matches our upper bound (equation \eqref{eqn:q_upp_bound}), up to polylogarithmic factors.

\section{Application to learning $k-$juntas}\label{sec:k-junta}
A $k$-junta is a function $f:\{0,1\}^n\to {0,1}$ that only depends on a subset of $k$ bits. Letting $\lset=\{0,1\}^n$, we can consider the concept class $\ccc = \{f\in\class{\lset} : f \text{ is a }k\text{ junta}\}$. The exact VC dimension of $\ccc$ is unknown, but we can bound it using the inequalities
\begin{equation}
    2^d\leq |\ccc| \leq |\lset|^d + 1.
\end{equation}
The first of these comes from noting that if $\ccc$ shatters a set of size $\ell$, it must contain at least $2^\ell$ elements; the second is called Sauer's lemma \cite{Anstee2002}. We can bound
\begin{equation}
    |\ccc| \leq \binom{n}{k}2^{(2^k)},
\end{equation}
since there are $\binom{n}{k}$ ways to choose the $k$ bits determining the junta, and then $2^{(2^k)}$ choices for the underlying function. We deduce that
\begin{equation}
    d \leq \log\left[\binom{n}{k}\right] + 2^k \leq k\log(en/k) + 2^k.
\end{equation}
Thus, our learning algorithm can PAC learn a $k-$junta with
\begin{equation}
    O\left(\frac{1}{\sqrt{\epsilon}}\left[k\log\left(\frac{n}{k}\right) + 2^k + \log(\frac{1}{\delta})\right]\log^9(1/\epsilon)\right),
\end{equation}
oracle calls. This has a worse scaling in $n$ than algorihtms presented in \cite{Atici2007, Chen2023}, but has a better scaling in $\epsilon$ and works for \textit{any} underlying distribution, whereas previous work has focused on the uniform distribution.\\

\section*{Acknowledgements}
The authors thank J. van Apeldoorn, R. de Wolf, S. Arunachalam, J. Cudby, C. Long and J. Bayliss for helpful discussions related to this work.

Wilfred Salmon was supported by the EPRSC and Hitachi. Sergii Strelchuk acknowledges support from the Royal Society University Research Fellowship. Tom Gur is supported by the UKRI Future Leaders Fellowship MR/S031545/1 and an EPSRC New Horizons Grant EP/X018180/1. Sergii Strelchuk and Tom Gur are further supported by EPSRC Robust and Reliable Quantum Computing Grant EP/W032635/1.

\printbibliography
\end{document}